\newcommand{\beq}{\begin{equation}}
\newcommand{\eeq}{\end{equation}}
\newcommand{\beqn}{\begin{eqnarray}}
\newcommand{\eeqn}{\end{eqnarray}}
\newtheorem{theorem}{\textbf{\text{Theorem}}}
\newtheorem{lemma}{\textbf{\text{Lemma}}}
\newtheorem{corollary}{\textbf{\text{Corollary}}}
\newtheorem{assumption}{\textbf{\text{Assumption}}}
\newenvironment{proof}[1][Proof]{\begin{trivlist}
\item[\hskip \labelsep {\bfseries #1}]}{\end{trivlist}}
\newcommand{\eqdef}{\mathrel{\mathop:}=}
\newcommand{\qed}{\nobreak \ifvmode \relax \else
      \ifdim\lastskip<1.5em \hskip-\lastskip
      \hskip1.5em plus0em minus0.5em \fi \nobreak
      \vrule height0.75em width0.5em depth0.25em\fi}
\begin{document}

\title{Analytical Modeling of  Mode Selection and Power Control for Underlay D2D Communication in Cellular Networks}

\author{Hesham ElSawy and Ekram Hossain 

\vspace{-2mm}
}

\maketitle

\begin{abstract}

Device-to-device (D2D) communication enables the user equipments (UEs) located in close proximity to bypass the cellular base stations (BSs) and directly connect to each other, and thereby, offload traffic from the cellular infrastructure. D2D communication can improve spatial frequency reuse and energy efficiency in cellular networks. This paper presents a comprehensive and tractable analytical framework for D2D-enabled uplink cellular networks with a  flexible mode selection scheme along with truncated channel inversion power control. Different from the existing mode selection schemes where the decision on mode selection is made based only on the D2D link distance (i.e., distance between two UEs using D2D mode of communication), the proposed mode selection scheme for a UE accounts for both the D2D link distance and cellular link distance (i.e., distance between the UE and the BS). The developed framework is used to analyze and understand how the underlaying D2D communication affects the cellular network performance. Through comprehensive numerical analysis, we investigate the expected performance gains and provide guidelines for selecting the network parameters.

{\em Keywords}:- Device-to-device (D2D) communication, uplink cellular networks, interference analysis, mode selection, channel inversion power control, stochastic geometry. 

\end{abstract}

\section{Introduction}

Maximizing spectrum utilization via spatial frequency reuse has always been a major technical challenge for cellular network designers. The challenge has become more acute with the increased  population of cellular users and their traffic requirements. Enabling device-to-device (D2D) communication in cellular networks has recently been proposed  as a promising solution to improve the spatial frequency reuse and boost up the throughput of cellular networks~\cite{Err, 3GP1, D2D_sur, D2D_1}. The main idea in D2D-enabled cellular networks is to permit transmitter-receiver pairs coexisting in close proximity to establish  direct peer-to-peer connections between each other. That is, as shown in Fig.~\ref{mode}, if a transmitter has its designated receiver within its transmission range (also called proximity detection region), the transmitter is allowed to bypass the base station (BS) and communicate in the D2D mode (i.e., directly establish a peer-to-peer link with the receiver). D2D communication enables short-range, low-power links to coexist with the cellular links and thereby improve the spatial reuse of the available spectrum, decrease the power consumption in the user equipments (UEs) via decreasing the required transmit power, and improve the total network throughput\cite{D2D_sur, D2D_1, D2D_2, D2D_3, D2D_4, swarm, D2D_st, D2D_SG1, D2D_SG2, and-d2d}. However, D2D communication poses a set of new technical challenges  which include interference management in the network, resource allocation for D2D and cellular links, and adaptive mode selection and power control for the UEs.

\begin{figure}[t]
	\begin{center}
		\includegraphics[trim= 2cm 3.5cm 2cm 4cm, clip, width=2.6 in]{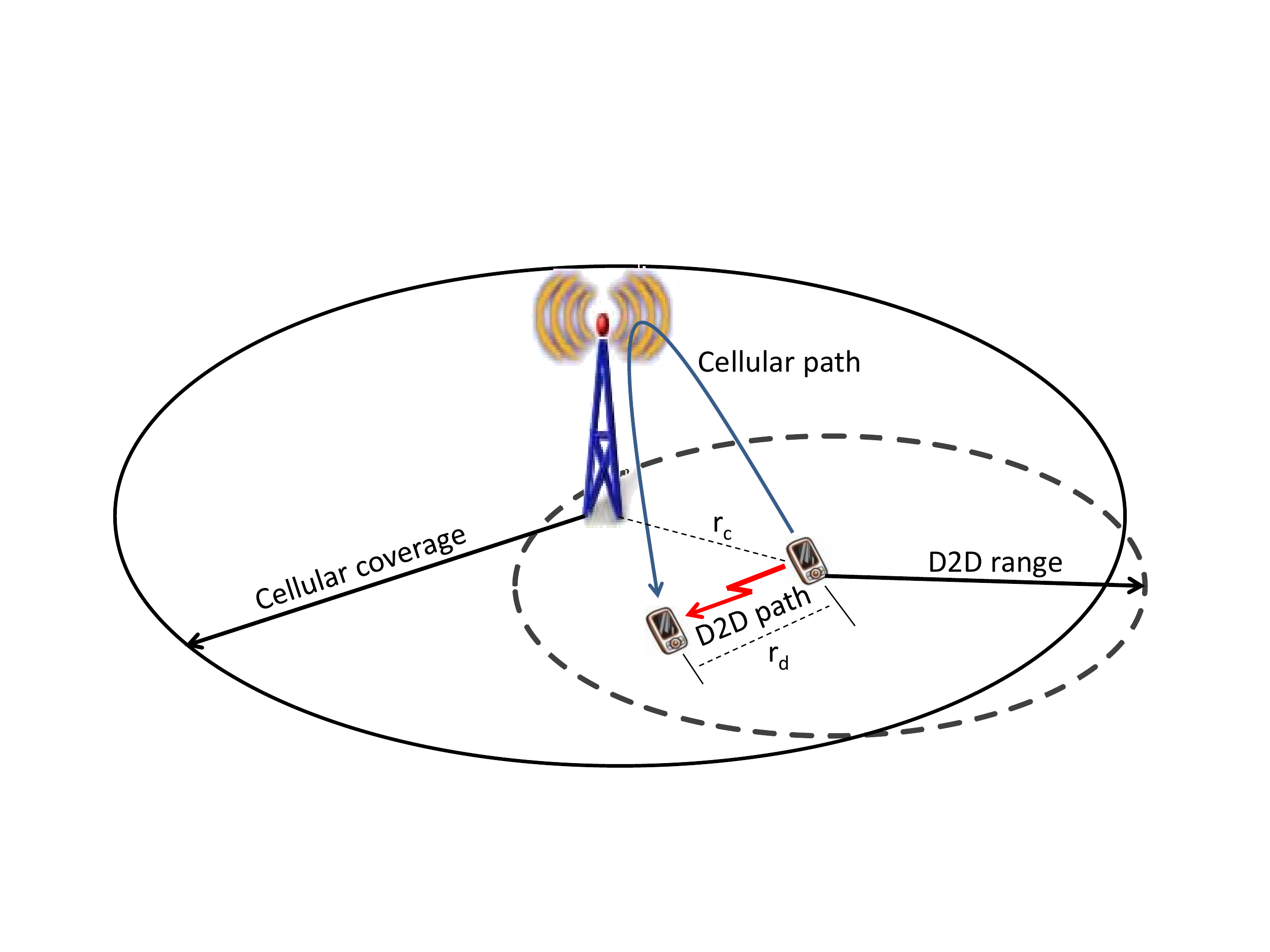}
	\end{center}
	\caption{D2D-enabled cellular network.}
	\label{mode}
\end{figure}
  
There are two main approaches for spectrum assignment between D2D links and cellular links, namely, the disjoint and shared spectrum assignments. It is well-known that rigid and exclusive spectrum assignment results in a significant spectrum underutilization due to the high variability in user population and their activity patterns across the spatial and time domains \cite{our-survey}. Therefore,  the shared spectrum assignment (also called underlay spectrum access) for D2D devices is generally preferred over the disjoint spectrum assignment approach~\cite{D2D_sur}. However, with the underlay spectrum access, cross-mode\footnote{ We use cross-mode interference to refer to the interference between users operating in the D2D mode and users operating in the cellular mode.} interference may degrade the signal-to-interference-plus-noise-ratio (${\rm SINR}$) for the ongoing transmissions and thus limit the network performance. In this paper, we consider the underlay spectrum sharing approach and develop a tractable modeling paradigm to understand and quantify the performance gain for uplink communication in D2D-enabled cellular networks. The analytical framework developed in this paper  uses tools from stochastic geometry \cite{martin-book} and accounts for power control, maximum transmit power of the UEs, and mode selection of UEs in an uplink cellular network.

Stochastic geometry is the only mathematical tool that is capable of modeling large scale wireless networks and capture the effect of network topology on network performance (averaged over all network realizations) and provide general yet simple expressions for several important performance metrics \cite{our-survey}. Stochastic geometry has been used to model, analyze, and design of ad hoc networks, multi-tier cellular networks, as well as hybrid networks \cite{our-survey, martin-book}. For the sake of analytical tractability, we use the Poisson point process (PPP) to model the cellular network topology as well as the spatial distribution of UEs. It has been shown in \cite{our-survey, pp-cellular, valid, trac, trac2} that the PPP leads to tight bound for the cellular network operation which is as accurate as the bound obtained via the idealistic grid based model. It is worth mentioning that our objective is not to develop sophisticated interference mitigation and cancellation techniques for D2D-enabled cellular networks. Instead, our main objective is to develop a tractable analytical framework to quantify the uplink transmission performance of underlay D2D-enabled cellular networks with a biasing-based mode selection scheme along with channel inversion-based power control and identify the performance tradeoffs due to the underlaying D2D communication. 

The major contributions of the paper can be summarized as follows:
\begin{itemize}

\item It proposes a novel  mode selection scheme for UEs in which the mode selection decision accounts for both the D2D link quality and the cellular link quality. More specifically, a biasing-based mode selection scheme with parameter $T_d$ (the bias factor) is proposed in which a potential D2D transmitter chooses the D2D mode if the biased D2D link quality is better than the cellular uplink quality, i.e.,  $T_d L^{(D2D)} \geq L^{(cellular)}$, where $L^{(D2D)}$ is the D2D link gain (or link quality) and $L^{(cellular)}$ is the cellular uplink gain. The proposed mode selection scheme captures the disabled D2D mode of communication (i.e., when $T_d=0$), the enforced D2D communication (i.e., when $T_d = \infty$), and the distance-based mode selection as special cases. 

\item  For the proposed mode selection scheme along with a truncated channel inversion-based power control for UEs, we develop a tractable analytical framework to evaluate the network performance in terms of ${\rm SINR}$ outage probabilities for both cellular and D2D UEs, average transmit power, average link capacity, and average total network capacity.


\item Through extensive numerical analysis, we show that the proposed mode selection scheme outperforms the traditional mode selection scheme based on the D2D link distance only. Also, the results reveal interesting tradeoff in the network performance in terms of average transmit power and ${\rm SINR}$ outage probability (and hence link capacity) when the network parameters such as the power control cutoff threshold and bias factor for mode selection $T_d$ are varied. 

\end{itemize}


The rest of the paper is organized as follows. The related literature is reviewed in Section~\ref{sec:related}. The system model, assumptions, and the mode selection scheme are described in Section~\ref{sys}. Also, the methodology of analysis is outlined in this section. Section~\ref{power} presents the analysis of transmit power for cellular and D2D UEs for channel inversion-based power control. The ${\rm SINR}$ performance is analyzed in~Section~\ref{ana}. Section~\ref{res} presents and discusses the numerical and simulation results. The paper is concluded in Section~\ref{con}. 
A list of the key mathematical notations used in this paper is given in Table~\ref{table}. 

\section{Related Work}
\label{sec:related}

Motivated by the expected gains offered by the underlay D2D communication, research efforts have been invested to analyze and optimize its operation. We can broadly classify the related approaches in the literature into two main categories:  instantaneous analysis approach and statistical analysis approach. In the former approach, a system objective function is formulated based on the the instantaneous system information (e.g., channel gains and link distances), which is assumed to be available. Then, the model is used to derive instantaneous optimal decisions (e.g., power allocation, channel allocation, and mode selection criterion) \cite{D2D_1, D2D_2, D2D_3, D2D_4, swarm}. Note that, the instantaneous optimal decisions should vary with the rapidly varying system parameters. On the other hand, the statistical approach exploits the system's statistical information (e.g., the distributions of the UEs' locations and channel gains), which are stable over a longer  period of time (i.e., w.r.t. the instantaneous approach), to model the system and derive the statistically optimal decisions \cite{D2D_st, D2D_SG1, D2D_SG2, and-d2d}.  Since finding the instantaneous optimal decisions may involve high signaling overhead to exchange the network information as well as high computational complexity, often suboptimal heuristic solutions \cite{D2D_3, D2D_4, swarm} are sought. In contrast to the instantaneous approach, a decision based on the  the statistical approach (for example, based on stochastic geometry analysis) may not be the best solution in a particular point of time, however, it could be optimal over a longer time horizon.

The authors in \cite{D2D_st} exploit the statistical approach to propose a simple power control mechanism for a D2D transmitter to ensure that the ${\rm SINR}$ violation for the cellular users due to cross-mode interference is maintained below a certain threshold. However, the analysis in \cite{D2D_st} is limited to a single cell, a single cellular user, and a single D2D link. In \cite{D2D_SG1}, the authors use the statistical approach to find the maximum intensity of D2D devices that can be accommodated by the uplink cellular network subject to an interference threshold. The authors in \cite{D2D_SG2} also use the statistical approach to find the optimal intensity and transmit power that maximize the achievable transmission capacity for a D2D-enabled uplink two-tier wireless network under outage probability constraints. In \cite{D2D_SG2}, it was assumed that the two cellular tiers operate in disjoint bands and that the D2D links  utilize both the bands. Then, the authors derive the optimal D2D link density and the fixed transmit power that they can use on each band. In both \cite{D2D_SG1} and \cite{D2D_SG2}, power control and mode selection are ignored and it is assumed that the D2D link distances are fixed. Mode selection and power control problems for D2D-enabled uplink cellular networks are considered in \cite{and-d2d}. However, the mode selection decision is based only on the D2D link distance (i.e., it does not consider the distance between the  D2D transmitter and the cellular BS). Furthermore, in \cite{and-d2d}, the maximum power constraint for the UEs is ignored and the coverage area of the tagged cell is approximated by a circle\footnote{It was shown in \cite{our-uplink} that,  under a maximum transmit power constraint for the UEs, for uplink analysis, the circular approximation for  the coverage of the tagged BS is inaccurate.}.  

Different from the exisiting D2D communication models in the literature, our proposed model is based on a flexible mode selection scheme which accounts for  D2D link quality as well as the cellular link quality. The traditional D2D link distance-based mode selection criterion is a special case of the proposed mode selection criterion. The analysis of the model avoids the circular approximation for the BS coverage area and also accounts for power control with a maximum transmit power constraint for the UEs in a multi-cell environment.

\begin{table}
\caption{List of key notations}
\centering
\begin{tabular}{|l |l|}
\hline
Notation & Definition \\
\hline \hline

$\mathbf{\Psi}$ & PPP constituted by the macro BSs  \\ \hline
$\lambda$ & Intensity of macro BSs  \\ \hline
$\mathbf{\Phi}$ & PPP constituted by the UEs  \\ \hline
$\mathcal{U}$ & Intensity of UEs  \\ \hline
$\mathcal{D}$ & Intensity of potential D2D UEs  \\ \hline
$\rho_{min}$ & Receiver sensitivity \\ \hline
$\rho_{o}$ & Power control cutoff threshold \\ \hline
$R_{max}$ & Maximum D2D communication range based on $\rho_{min}$ \\ \hline
$R$ & D2D communication range based on $\rho_o$ \\ \hline
$1-p$ & D2D truncation probability \\ \hline
$h$ & Small-scale fading channel gain \\ \hline
$\theta$ & Required ${\rm SINR}$ threshold \\ \hline
$T_d$ & D2D bias factor  \\ \hline
$\mathcal{P}_d$ & D2D mode selection probability  \\ \hline
$r_c$ & Distance from a generic UE to the nearest BS  \\ \hline
{\multirow{2}{*}{$r_d$}} & Distance from a generic potential D2D UE to its \\
 & receiver  \\ \hline
{\multirow{2}{*}{$\tilde{r}_c$}} & Conditional distance from a generic UE operating in \\ 
$$ & the cellular mode to the nearest BS  \\ \hline
{\multirow{2}{*}{$\tilde{\tilde{r}}_c$}} & Conditional distance from a generic UE operating in   \\
$$ &  cellular mode (for case \#2) to the nearest BS  \\ \hline
{\multirow{2}{*}{$\tilde{r}_d$}} & Conditional distance from a generic D2D UE  \\ 
$$ & operating in the D2D mode to its receiver  \\ \hline
{\multirow{2}{*}{$P_c$}} & Transmit power of a generic UE operating in cellular   \\
$$&    mode  \\ \hline
{\multirow{2}{*}{$P_d$}} & Transmit power of a generic UE operating in D2D \\
$$ &   mode  \\ \hline
$P_u$ & Maximum transmit power of a UE  \\ \hline
${P}_2$ & Transmit power of a generic UE (for case $\# 2$) \\ \hline
{\multirow{2}{*}{$\tilde{P}_4$}} & Conditional transmit power of a potential   \\
$$ &D2D UE operating in cellular mode \\ \hline
$\eta_c$ & Path-loss exponent for cellular link  \\ \hline
$\eta_d$ & Path-loss exponent for D2D link  \\ \hline
$\sigma^2$ & Noise power  \\ \hline
{\multirow{2}{*}{$\chi$}} & Operation mode of a UE   \\ 
           & ($\chi \in$ \{cellular mode, D2D mode\}) \\ \hline
$\mathcal{I}$ & Aggregate interference power  \\ \hline
$\mathcal{R}_{\chi}$ & Average link capacity in mode $\chi$ \\ \hline
$\mathcal{T}$ & Total network capacity \\ \hline
$\mathcal{L}_X(.)$ & Laplace transform of the {\em pdf} of  $X$   \\ \hline
$\mathbb{P}\left\{.\right\}$ & Probability of an event   \\ \hline
$\mathbb{E}\left[.\right]$ & Expectation  \\ \hline
$\mathbbm{1}_{\left\{.\right\}}$ & Indicator function   \\ \hline
$f_X(.)$ & {\em pdf} of random variable $X$   \\ \hline

{\multirow{2}{*}{$f_{X|Y}(x|y)$}} & Conditional {\em pdf} of the random variable X  \\
 & given $Y=y$   \\ 
\hline
\end{tabular}
\label{table}
\end{table}

\section{System Model, Assumptions, and Methodology of Analysis} \label{sys}

\subsection{Network Model}

We consider a D2D-enabled single-tier (i.e., macro-tier only) uplink cellular network. The BSs are spatially distributed in $\mathbb{R}^2$ according to the PPP $\mathbf{\Psi}=\left\{m_i; i=1,2,3,...\right\}$ with intensity $\lambda$, where $m_i \in \mathbb{R}^2$ is the location of the $i^{th}$ BS. The UEs (i.e., potential transmitters) are spatially distributed in $\mathbb{R}^2$ according to an independent PPP $\mathbf{\Phi}=\left\{u_i; i=1,2,3,..\right\}$\footnote{With a slight abuse of notation, we will use $m_i$ to denote both the location of the $i^{th}$ BS and the $i^{th}$ BS itself, and the same for $u_i$.} with intensity $\mathcal{U}$ such that $\mathcal{U} \gg \lambda$. We assume that each BS will always have at least one UE to serve in the uplink. All UEs have a maximum transmit power of $P_u$. In D2D mode, two UEs are able to communicate directly (i.e., in single hop). A UE can bypass the BS and communicate with its intended receiver if the receiver is located within the D2D proximity $R_{max}$. The D2D proximity $R_{max}$ is determined by the maximum transmit power $P_u$ of a UE and receiver sensitivity $\rho_{min}$. That is, $R_{max}= \left(\frac{P_u}{\rho_{min}}\right)^{\frac{1}{\eta_d}}$, where $\eta_d$ is the path-loss exponent for the D2D links\footnote{$R_{max}$ is calculated based on the radio channel model to be presented in Sec.~\ref{Rmodel}.}. A UE which has its intended receiver located within its D2D proximity is referred to as a {\em potential D2D} transmitter (or equivalently potential D2D UE). Note that a {\em potential} D2D transmitter does not necessarily select the D2D mode of communication. The communication mode is selected based on the mode selection scheme to be presented later in this paper. It is assumed that the probability of being a potential D2D transmitter is independent of the transmitter location. Therefore, the potential D2D transmitters constitute a PPP with intensity $\mathcal{D} \leq \mathcal{U}$. 

It is assumed that the intended receiver for a potential D2D transmitter is uniformly distributed inside the D2D proximity $R_{max}$. That is, the probability density function ({\em pdf}) of the D2D link distance is given by: $f_{r_d}(r)= \frac{2r}{R_{max}^2}$, $0 \leq r \leq R_{max}$. 


Due to the maximum transmit power  ($P_u$) constraint, the UEs use a truncated channel inversion power control in which the transmit power compensates the path-loss to keep the average signal power received at the intended receiver (i.e., BS or D2D receiver for cellular and D2D mode of communication, respectively) equal to certain threshold $\rho_o \geq \rho_{min}$ \cite[chapter 4]{andrea-book}. Therefore, a connection (i.e., cellular uplink or D2D link) is established if and only if the transmit power required for the path-loss inversion is less than or equal to $P_u$. Otherwise, the UE does not transmit and goes into an outage (hereafter denoted by {\em cellular truncation outage}) due to the insufficient transmit power. For the D2D UEs, the truncated channel inversion power control reduces the D2D proximity to $R=\left(\frac{P_u}{\rho_o}\right)^\frac{1}{\eta_d}$, and hence, the intensity of the potential D2D UEs reduces to $p  \mathcal{D}$, where $p=\left(\frac{R}{R_{max}}\right)^2=\left(\frac{\rho_{min}}{\rho_{o}}\right)^\frac{2}{\eta_d}$. Note that $(1-p)$ denotes the {\em D2D truncation probability}. 

Universal frequency reuse is used across the cellular network. However, there is no intra-cell interference between cellular UEs. That is, each BS assigns a unique channel to each of its associated UEs. Since all channels have similar interference statistics, we restrict our analysis to a one uplink channel which is shared by the D2D UEs in an underlay spectrum sharing fashion. 

\subsection{Radio Channel Model} \label{Rmodel}

A general power-law path-loss model is considered in which the signal power decays at the rate $r^{-\eta}$ with the propagation distance $r$, where $\eta > 2$ is the path-loss exponent. Due to the different propagation environments experienced by the cellular links and the D2D links, each type of link has its own path-loss exponent, namely, $\eta_c$ and $\eta_d$, respectively, for the cellular links and D2D links. The channel (power) gain between two generic locations $x,y \in \mathbb{R}^2$ is denoted by $h(x,y)$. All the channel gains are assumed to be independent of each other, independent of the spatial locations, symmetric, and are identically distributed (i.i.d.). Therefore, for the brevity of exposition, hereafter, the spatial  indices $x,y$ are dropped. For analysis, only Rayleigh fading environment is assumed\footnote{Techniques to relax the Rayleigh fading assumption to general fading channels can be found in \cite{our-survey}.}, hence, the channel gain $h$ is assumed to be exponentially distributed with unit mean. An ${\rm SINR}$ capture model is considered where a message can be successfully decoded at the receiver if and only if the ${\rm SINR}$ at the receiver is greater than a certain threshold $\theta$. If the ${\rm SINR}$ at the receiver does not exceed the threshold $\theta$, the link experiences an outage (hereafter denoted by {\em ${\rm SINR}$ outage}). 

\subsection{User Association, Mode Selection, and UE Classification}

To ensure reliable uplink association and avoid the ping-pong effects due to handovers, UEs associate to the BSs based on their long-term average link quality (and hence distance). That is, the UEs (i.e., potential transmitters) associate with their nearest BSs. Note that, in the D2D mode, for a transmitter UE, the receiver UE does not need to be in the same cell.  A flexible mode selection scheme based on the biased link quality is applied to tune the tradeoff between power consumption, interference, spatial frequency reuse, and  data offloading. A potential D2D transmitter chooses the D2D mode if the biased D2D link quality  is at least as good as the cellular uplink quality. That is, a potential D2D transmitter chooses the D2D mode if $T_d r_d^{-\eta_d} \geq r_c^{-\eta_c}$, where $r_d$ is the D2D link distance, $r_c$ is the distance between the UE and its closest BS (i.e., the cellular uplink distance), and $T_d$ is a bias factor to control traffic offloading from the cellular infrastructure to the D2D mode of communication. On one extreme, setting $T_d=0$ disables the D2D communication. On the other extreme, setting $T_d=\infty$ forces each potential D2D UE to communicate via D2D mode. 

The network model with the aforementioned mode selection criterion  is illustrated in Fig~\ref{network} for different values of bias factor $T_d$. It is worth mentioning that one main advantage of the aforementioned mode selection criterion is that it correlates the locations of D2D transmitters and the D2D link distances to the locations of the BSs which introduces an inherent interference protection to the cellular uplink. That is, the proposed mode selection criterion along with truncated channel inversion power control ensures that the interference from a generic D2D transmitter received at a generic BS is upper bounded by $ T_d \rho_o$, where the bias factor $T_d$ can be used to control the interference temperature at the BSs. The set of D2D transmitters will form a Poisson hole process (with random hole radii) with holes centered around the BSs \cite{martin-book, martin-cognitive}. Following the proposed mode selection scheme, the probability that a generic potential D2D UE selects the D2D mode and the intensity of D2D links are given via the following Lemma.  

\begin{lemma}
\label{lem1}
For a cellular network with BS intensity $\lambda$, D2D range $R$, and D2D link biasing $T_d$, the probability that a generic potential D2D UE selects the D2D mode is given by $ \mathcal{P}_d =\frac{ \eta_c T_d^{\frac{2}{\eta_d}}}{ \eta_d R^2} \left(\frac{1}{\pi \lambda}\right)^{\frac{\eta_c}{\eta_d}} \gamma\left(\frac{\eta_c}{\eta_d},\pi \lambda \left(\frac{R^{\eta_d}}{T_d}\right)^{\frac{2}{\eta_c}} \right)$, where $\gamma(a,b)=\int_{0}^{b} x^{a-1} e^{-x} dx$ is the lower incomplete gamma function. For equal path-loss exponents, the expression for the D2D mode selection probability reduces to $\mathcal{P}_d = \frac{T_d^{\frac{2}{\eta}}}{\pi \lambda R^2}  \left(1-\exp\left(- \pi \lambda R^2 T_d^{\frac{-2}{\eta}}\right)\right)$. The intensity of D2D links is given by $ p \mathcal{D} \mathcal{P}_d$. 
\end{lemma}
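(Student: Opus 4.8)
The plan is to compute $\mathcal{P}_d$ as the probability of the mode-selection event over the two independent distances it involves, and then to obtain the link intensity by a thinning argument. First I would record the distributions of the two distances. Since the BSs form a PPP of intensity $\lambda$ and each UE associates with its nearest BS, the void probability of the BS process gives the complementary CDF $\mathbb{P}\{r_c > x\} = e^{-\pi \lambda x^2}$, hence $f_{r_c}(x) = 2\pi\lambda x\, e^{-\pi\lambda x^2}$. Among potential D2D UEs that survive the D2D truncation, the intended receiver is uniform in the disk of radius $R$, so $f_{r_d}(r) = 2r/R^2$ for $0 \le r \le R$. Crucially, $r_d$ and $r_c$ are independent, because the receiver's placement is independent of the BS locations.

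Next I would rewrite the mode-selection rule $T_d r_d^{-\eta_d} \ge r_c^{-\eta_c}$ as a threshold on $r_c$ for a given $r_d$: the inequality is equivalent to $r_c \ge (r_d^{\eta_d}/T_d)^{1/\eta_c}$. Conditioning on $r_d$ and invoking the CCDF of $r_c$, the probability that the nearest BS lies at least this far away is $\exp(-\pi\lambda (r_d^{\eta_d}/T_d)^{2/\eta_c})$. Averaging against $f_{r_d}$ then gives
\beq
\mathcal{P}_d = \int_0^R \frac{2r}{R^2}\, \exp\!\left(-\pi\lambda\, T_d^{-2/\eta_c}\, r^{2\eta_d/\eta_c}\right) dr.
\eeq

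The main work is casting this integral into the incomplete-gamma form, and this is the step most prone to error. I would substitute first $v = r^2$, which turns $2r\,dr$ into $dv$ and the exponent into $v^{\eta_d/\eta_c}$, and then $s = \pi\lambda\, T_d^{-2/\eta_c}\, v^{\eta_d/\eta_c}$, which maps the upper limit $v=R^2$ to $s = \pi\lambda (R^{\eta_d}/T_d)^{2/\eta_c}$ and produces the factor $s^{\eta_c/\eta_d - 1} e^{-s}$. Carefully collecting the powers of $T_d$ and $\pi\lambda$ emerging from the chained substitution yields exactly $\frac{\eta_c T_d^{2/\eta_d}}{\eta_d R^2}(\pi\lambda)^{-\eta_c/\eta_d}\,\gamma(\eta_c/\eta_d,\ \pi\lambda (R^{\eta_d}/T_d)^{2/\eta_c})$, which is the claimed expression. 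The equal-exponent case follows at once by setting $\eta_c=\eta_d=\eta$, so that the shape parameter becomes $1$ and $\gamma(1,b)=1-e^{-b}$, together with the simplification $(R^{\eta}/T_d)^{2/\eta}=R^2 T_d^{-2/\eta}$.

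Finally, for the link intensity I would note that the D2D transmitters are obtained from the truncated potential-D2D process (intensity $p\mathcal{D}$) by retaining each UE with probability equal to its own mode-selection probability, which depends only on that UE's $r_d$ and $r_c$. Since this retention probability, averaged over $r_d$ and over the UE's location (equivalently over the nearest-BS distance $r_c$), is precisely $\mathcal{P}_d$, the first-order intensity of the retained set is $p\mathcal{D}\,\mathcal{P}_d$; the dependence on the common BS field makes the retained set a Poisson hole process rather than a simple PPP, but only this mean intensity is asserted by the lemma.
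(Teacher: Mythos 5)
Your proposal is correct and follows essentially the same route as the paper's proof: both express the mode-selection event as a threshold on $r_c$, average the nearest-BS void probability $e^{-\pi\lambda(r_d^{\eta_d}/T_d)^{2/\eta_c}}$ against $f_{r_d}(r)=2r/R^2$ on $[0,R]$, and reduce the resulting integral to the lower incomplete gamma function by the same change of variables. Your additional remarks --- the explicit $\gamma(1,b)=1-e^{-b}$ simplification for equal exponents and the thinning argument giving the first-order intensity $p\mathcal{D}\mathcal{P}_d$ (with the correct caveat that the retained set is a Poisson hole process, not a PPP) --- are sound and merely spell out steps the paper asserts without proof.
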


\begin{proof}
See \textbf{Appendix \ref{proof1}}.
\end{proof}

\begin{figure}[th]
	\begin{center}
		\subfigure[]{\label{nett1}\includegraphics[trim= 2cm 6cm 2cm 5.5cm, clip, width=2.6 in]{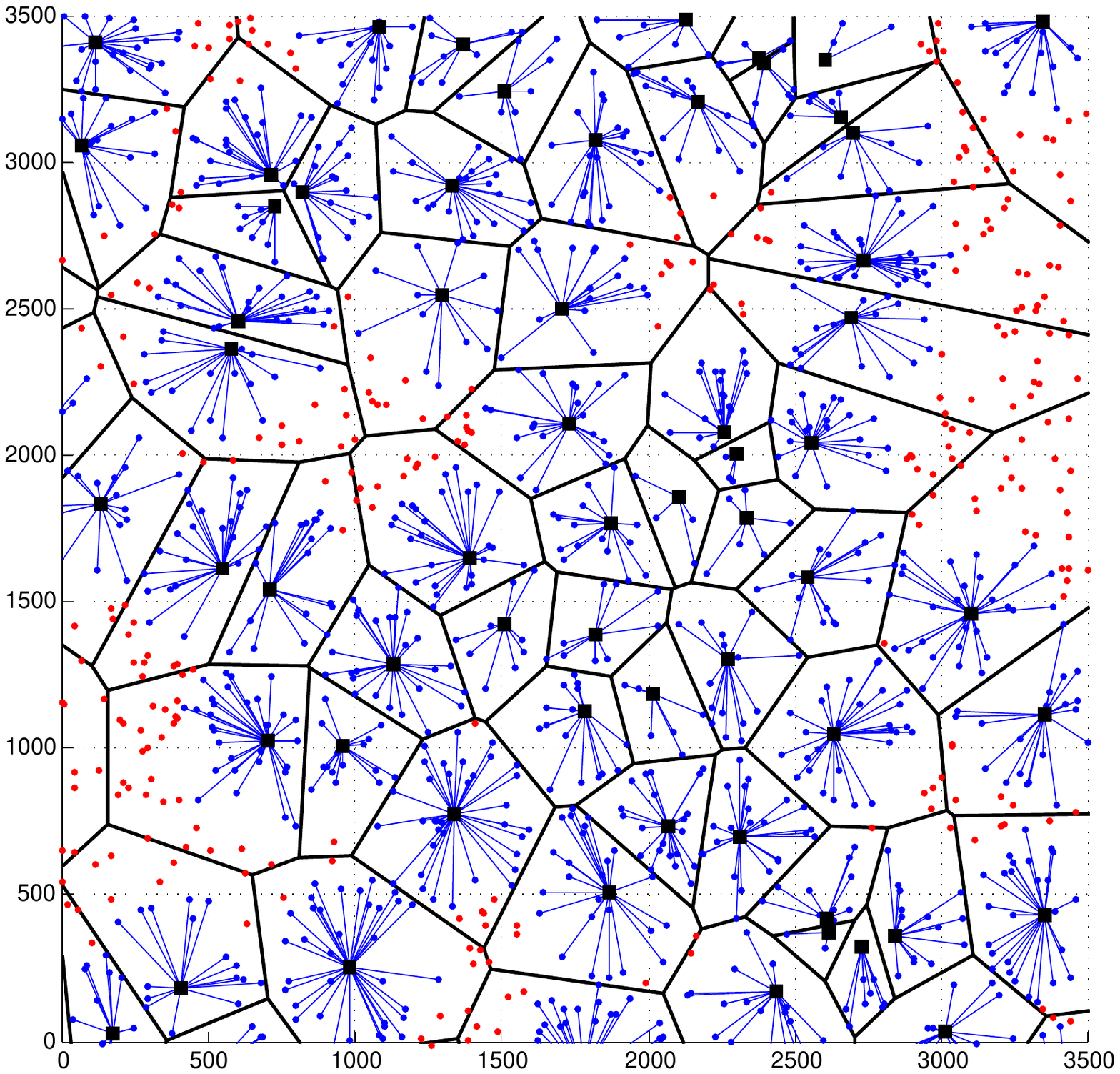}}
	  \subfigure[]{\label{nett2}\includegraphics[trim= 2cm 6cm 2cm 5.5cm, clip, width=2.6 in]{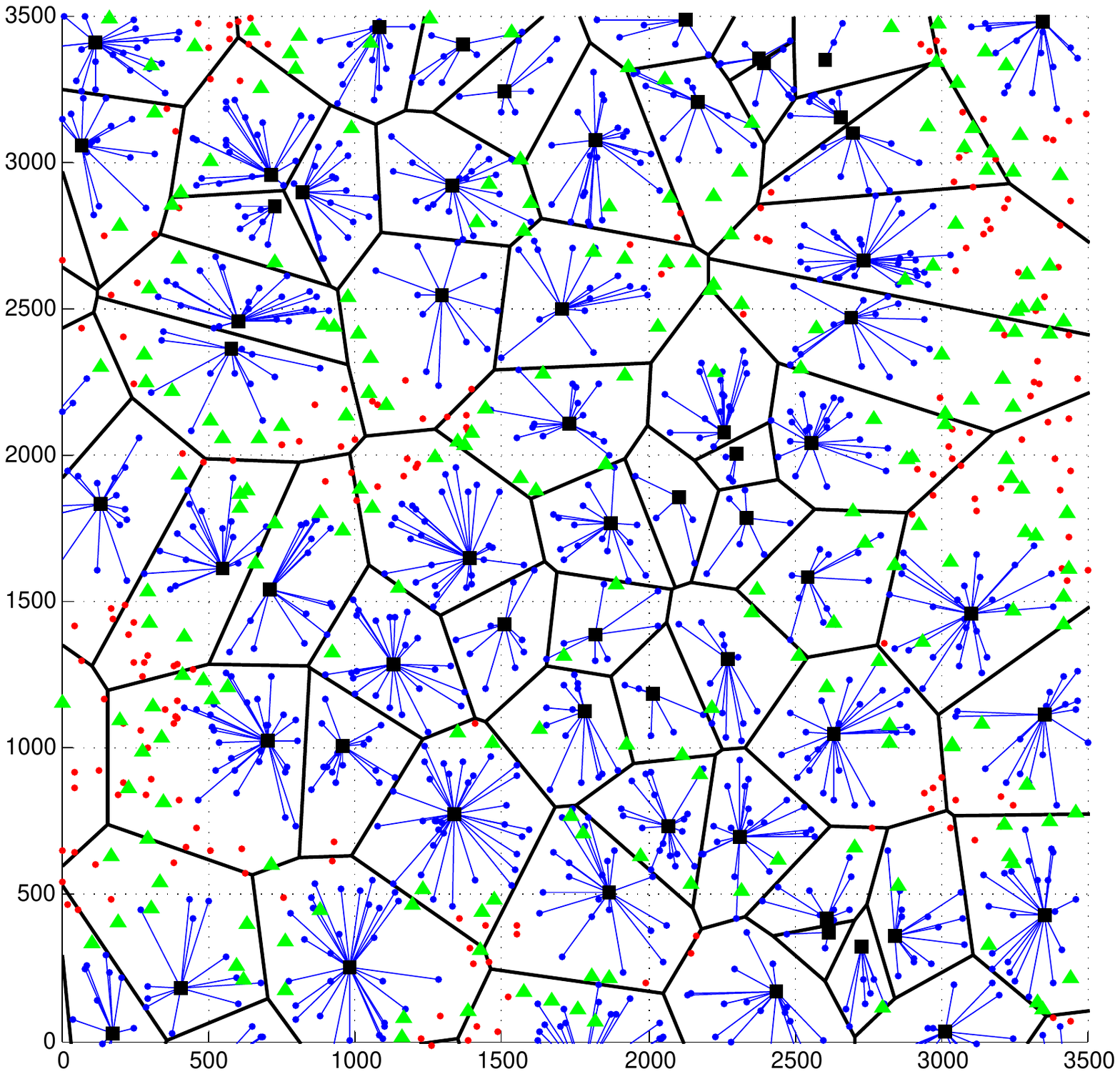}}
	\end{center}
	\caption{ The black squares represent the BSs, the blue dots represent the cellular UEs, blue lines indicate the uplink connections, red dots represent users in truncation outage, and the green triangles represent the D2D transmitters (D2D receivers are omitted for clarity of presentation). The network model is simulated for $\lambda = 5$ BS/km$^2$, $\mathcal{D} = \mathcal{U}=100$ UE/km$^2$, $\rho_{min}$ = -90, $\rho_o = -80$ dBm, and (a) $T_d=0$, (b) $T_d =1$.}
	\label{network}
\end{figure}


Due to truncated channel inversion power control, not all of the UEs can communicate in the uplink. That is, UEs located at a distance greater than $\left(\frac{P_u}{\rho_o}\right)^\frac{1}{\eta_c}$ from their nearest BS are not able to communicate in the cellular mode due the limited transmit power. Therefore, the UEs are divided into two subsets, namely, the covered\footnote{We use the term {\em covered} to denote that a UE does not experience truncation outage and can be served in the uplink by at least one BS.} UEs and the uncovered UEs. Furthermore, D2D communication divides the set of UEs into two other independent subsets, namely, the subset of potential D2D UEs and the set of non-potential D2D UEs. Therefore, the complete set of UEs $\mathbf{\Phi}$ is divided into four non-overlapping subsets as shown in Fig.~\ref{users}. In the first case, UEs are neither covered by the BSs nor are potential D2D UEs, hence UEs in case $\#1$ will not transmit and will experience truncation outage due to insufficient transmit power. UEs in case $\#2$ are covered by the BSs but are not potential D2D UEs. On the contrary, UEs in case $\#3$ are uncovered by the BSs but are potential D2D UEs. Therefore,  UEs in case $\#2$ and case $\# 3$ are forced to communicate via, cellular mode and D2D mode, respectively. Only UEs in case $\#4$ are covered by the BSs and are potential D2D transmitters, and hence, only UEs in case $\# 4$ have the opportunity to apply the aforementioned selection criterion to select their operation mode (i.e., cellular mode or D2D mode). Fig.~\ref{users} shows the classification of UEs, connection type, the type of point process they constitute, and their intensity.

\begin{figure}[th]
	\begin{center}
\includegraphics[width=2.6 in]{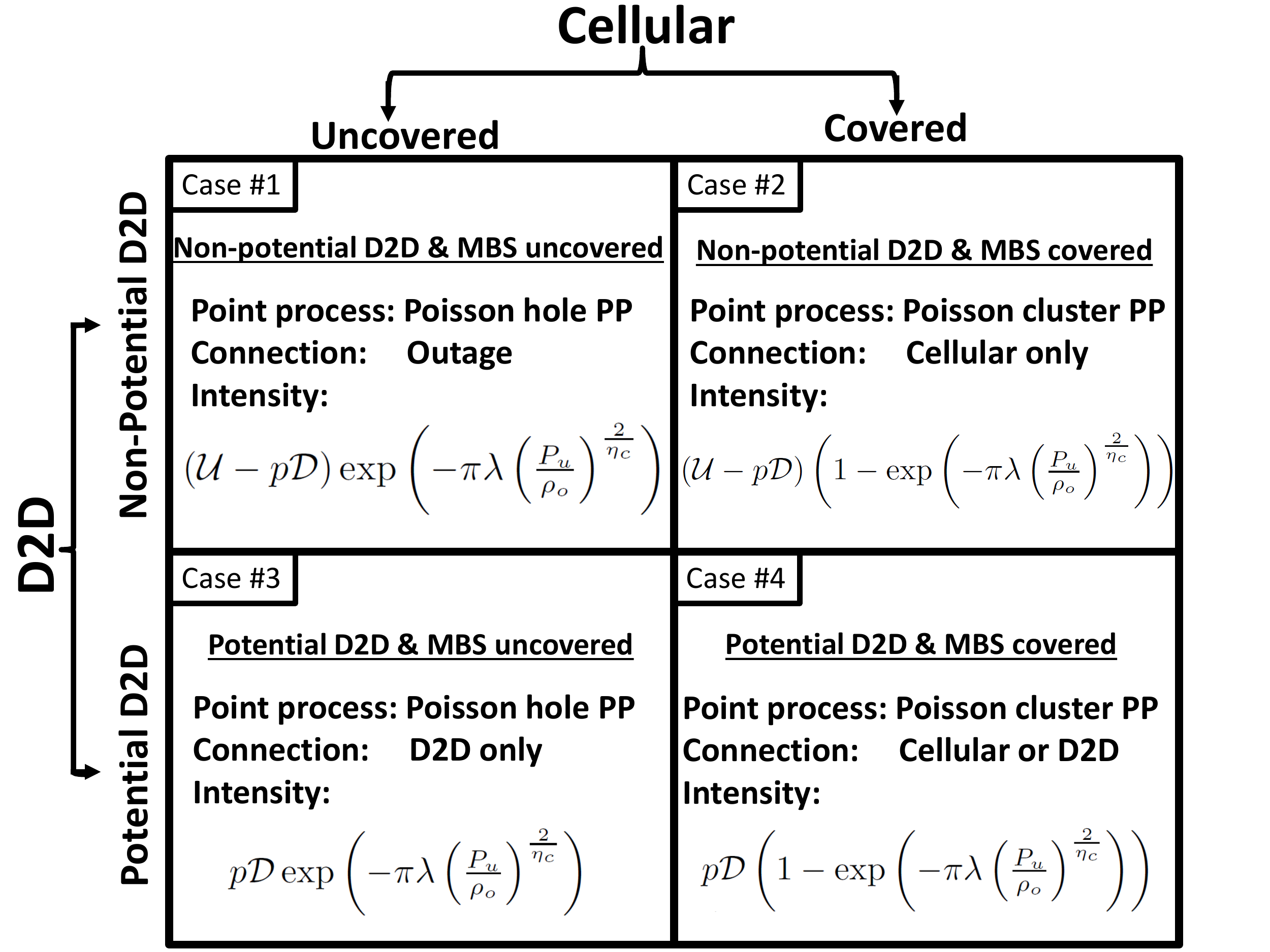}
	\end{center}
	\caption{Classification of UEs (MBS stands for `macro BS').}
	\label{users}
\end{figure}

\subsection{Methodology of Analysis and Performance Metrics}

Due to the assumed power control along with the random locations of the BSs and UEs, the transmit powers of the UEs and the ${\rm SINR}$s experienced by the receivers are random. First, we characterize the transmit powers of the users in each of the aforementioned cases. We characterize the transmit power via its probability density function ({\em pdf}) and its $\alpha$th moment (for $\alpha > 0$). Then, we characterize the ${\rm SINR}$ by deriving its cumulative distribution function ({\em cdf}). Having the transmit powers and the ${\rm SINR}$s characterized, several insights into the network performance can be obtained. In this paper, the main performance metrics are the {\em ${\rm SINR}$ outage probability}, the {\em link capacity}, and the {\em total network capacity}. The link capacity per unit bandwidth is obtained by using Shannon's formula. The total network capacity is the sum capacity of all operating links normalized per unit area and it reflects the spatial frequency reuse efficiency. 


\section{Analysis of Transmit Power of UEs} \label{power}

Due to the random network topology along with the truncated channel inversion power control used by the UEs, each UE will have different transmit power which depends on the operation mode (i.e., cellular or D2D) as well as the link distance. In this section, we derive the {\em pdf} as well as the moments of the transmit powers of the UEs for each of the cases shown in Fig.~\ref{users}. Note that UEs in case $\# 1$ are not transmitting. Hence, the transmit power of a generic user in case $\#1$ is $P_1 = 0$.

\subsection{D2D Mode}


A UE selects the D2D mode if $T_d r_d^{-\eta_d} \geq r_c^{-\eta_c}$. Note that only UEs in case $\# 3$ and case $\# 4$ can communicate in the D2D mode.  Let $\tilde{r}_d \eqdef \left\{r_d : T_d r_d^{-\eta_d} \geq  r_c^{-\eta_c}\right\}$ denote the conditional D2D link distance of a UE operating in the D2D mode (i.e., conditioning on the mode selection). Then, due to the applied selection criterion, we have $\tilde{r}_d^{\eta_d} \leq  T_d r_c^{\eta_c}$. Using this fact, the transmit power of a generic UE operating in the D2D mode can be written as $P_d = \rho_o \tilde{r}_d^{\eta_d} = \left\{\rho_o r_d^{\eta_d} : r_d^{\eta_d} \leq T_d r_c^{\eta_c} \right\}$, and can be characterized via the following lemma.

\begin{lemma}
\label{lem2}
In a single-tier Poisson D2D-enabled cellular network with truncated channel inversion power control with cutoff threshold $\rho_o$ and bias factor $T_d$, the {\em pdf} of the transmit power of a generic UE operating in the D2D mode is given by

\small
\begin{align}
f_{P_d}(x)&= \frac{2 x^{\frac{2}{\eta_d}-1}  (\pi \lambda)^{\frac{\eta_c}{\eta_d}} e^{-\pi \lambda \left(\frac{x}{T_d \rho_o}\right)^\frac{2}{\eta_c}} }{\eta_c (T_d \rho_o)^{\frac{2}{\eta_d}} \gamma\left(\frac{\eta_c}{\eta_d},\pi \lambda \left(\frac{P_u}{T_d\rho_o}\right)^{\frac{2}{\eta_c}}\right)}, \notag  \\
& \quad \quad \quad \quad  \quad \quad  \quad \quad \quad \quad   \quad \quad   \quad \quad 0 \leq x \leq P_u.
\end{align}
\normalsize

The moments of the transmit power can be obtained as 
\small
\begin{align}
\mathbb{E}\left[{P_d^\alpha}\right] &=\frac{ (T_d \rho_o)^\alpha   \gamma\left(\frac{\alpha \eta_c}{2}+ \frac{\eta_c}{\eta_d},\pi \lambda \left(\frac{P_u}{T_d\rho_o}\right)^{\frac{2}{\eta_c}}\right) }{(\pi \lambda)^\frac{\alpha\eta_c}{2} \gamma\left(\frac{\eta_c}{\eta_d},\pi \lambda \left(\frac{P_u}{T_d\rho_o}\right)^{\frac{2}{\eta_c}}\right)}.
\end{align}
\normalsize
\end{lemma}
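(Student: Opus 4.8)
The plan is to obtain the {\em pdf} of $P_d$ by first pinning down the distribution of the conditional D2D link distance $\tilde{r}_d$ and then pushing it through the monotone map $x=\rho_o r^{\eta_d}$. Since the BSs form a PPP of intensity $\lambda$, the distance $r_c$ to the nearest BS is Rayleigh distributed with complementary {\em cdf} $\mathbb{P}\left\{r_c \geq s\right\} = e^{-\pi \lambda s^2}$ (the void probability of a disk of radius $s$), while the D2D receiver is uniform in the disk of radius $R$ so that $f_{r_d}(r)=2r/R^2$ on $[0,R]$; crucially $r_c$ and $r_d$ are independent. First I would rewrite the selection rule $T_d r_d^{-\eta_d}\geq r_c^{-\eta_c}$ as $r_c \geq \left(r_d^{\eta_d}/T_d\right)^{1/\eta_c}$, so that, conditioned on $r_d$, the probability of selecting the D2D mode equals exactly the void probability $e^{-\pi\lambda\left(r_d^{\eta_d}/T_d\right)^{2/\eta_c}}$.

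Next I would form the conditional density of $\tilde{r}_d$ by multiplying $f_{r_d}(r)$ by this per-distance selection probability and normalizing by the total D2D-selection probability $\mathcal{P}_d$ from Lemma~\ref{lem1}, giving $f_{\tilde{r}_d}(r)=\frac{2r}{R^2 \mathcal{P}_d}e^{-\pi\lambda\left(r^{\eta_d}/T_d\right)^{2/\eta_c}}$ on $[0,R]$. Applying the change of variables $x=\rho_o r^{\eta_d}$, whose Jacobian is $\frac{dr}{dx}=\frac{1}{\eta_d}\rho_o^{-1/\eta_d}x^{1/\eta_d-1}$ and which maps $[0,R]$ onto $[0,P_u]$ because $R^{\eta_d}=P_u/\rho_o$, yields $f_{P_d}$ up to the explicit $\mathcal{P}_d$ factor. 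Substituting the closed form of $\mathcal{P}_d$ from Lemma~\ref{lem1} and collecting the powers of $T_d\rho_o$ and $\pi\lambda$ then produces the claimed expression; the appearance of $(T_d\rho_o)^{2/\eta_d}$ and the incomplete gamma function in the denominator is precisely the reciprocal of the Lemma~\ref{lem1} normalizer.

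For the moments I would compute $\mathbb{E}\left[P_d^\alpha\right]=\int_0^{P_u}x^\alpha f_{P_d}(x)\,dx$ directly. The substitution $u=\pi\lambda\left(x/(T_d\rho_o)\right)^{2/\eta_c}$ turns the integrand into $u^{\left(\alpha\eta_c/2+\eta_c/\eta_d\right)-1}e^{-u}$ up to constants and sends the upper limit $P_u$ to $\pi\lambda\left(P_u/(T_d\rho_o)\right)^{2/\eta_c}$, so the integral collapses to $\gamma\!\left(\frac{\alpha\eta_c}{2}+\frac{\eta_c}{\eta_d},\,\pi\lambda\left(P_u/(T_d\rho_o)\right)^{2/\eta_c}\right)$ times an algebraic prefactor. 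That prefactor cancels against the normalizing constant of $f_{P_d}$, leaving the ratio of two incomplete gamma functions together with the factor $(T_d\rho_o)^\alpha(\pi\lambda)^{-\alpha\eta_c/2}$ in the statement.

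The main obstacle is not conceptual but arithmetic: the derivation hinges on tracking fractional exponents of $T_d\rho_o$, $\pi\lambda$, and $x$ correctly through both the Jacobian and the moment substitution, and on recognizing the identity $\frac{(\alpha+2/\eta_d)\eta_c}{2}=\frac{\alpha\eta_c}{2}+\frac{\eta_c}{\eta_d}$ so that the incomplete-gamma parameters line up. A secondary point to verify is that the truncation constraint $r_d\leq R$ is what supplies the finite upper limit $P_u$ (rather than $\infty$), which is exactly the same upper limit appearing inside the $\mathcal{P}_d$ normalizer, guaranteeing the cancellation in both the {\em pdf} and the moment expressions.
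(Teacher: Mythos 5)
Your proposal is correct and follows essentially the same route as the paper's Appendix B: there the authors work in the power domain, writing $P_d=\left\{X_d : X_d \leq T_d X_c\right\}$ and integrating the conditional density against $f_{X_c}$, but the inner integral $\int_{x/T_d}^{\infty} f_{X_c}(y)\,dy = e^{-\pi\lambda\left(x/(T_d\rho_o)\right)^{2/\eta_c}}$ is exactly the void probability you invoke, and their normalizer $\mathbb{P}\left\{X_d \leq T_d X_c\right\}$ is exactly the $\mathcal{P}_d$ you import from Lemma~\ref{lem1} (note $R^{\eta_d}/T_d = P_u/(T_d\rho_o)$, so the gamma arguments agree). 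Your distance-domain Bayes argument followed by the change of variables $x=\rho_o r^{\eta_d}$, and the moment computation via $u=\pi\lambda\left(x/(T_d\rho_o)\right)^{2/\eta_c}$, are algebraically equivalent to the paper's derivation and equally valid.
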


\begin{proof}\hspace{-1mm}:
See \textbf{Appendix \ref{proof2}}.
\end{proof}

\subsection{Cellular Mode}

There are  two cases where a UE operates in the cellular mode. The first case is that the UE is not a potential D2D and $P_u$ is sufficient to invert the path-loss towards the nearest BS such that the received power at the BS is equal to $\rho_o$ (i.e., UEs in case $\# 2$). The second case is that the UE is a potential D2D UE and its uplink quality towards the serving BS is better than the biased link quality towards the D2D receiver (i.e., case $\#4$ when $r_c^{-\eta_c} \geq T_d r_d^{-\eta_d}$). Note that UEs in case $\#2$ have no option except to communicate in the cellular mode. Let $\tilde{r}_c \eqdef \left\{r_c : r_c \leq \left(\frac{P_u}{r_o}\right)^\frac{1}{\eta_c}\right\}$ denote the conditional cellular distance of a UE in case $\# 2$ (i.e., conditioning on that the UE is not in truncation outage). Then, the transmit power of a generic UE in case $\# 2$, $P_2 = \rho_o \tilde{r}_c^{\eta_c} = \left\{\rho_o r_c^{\eta_c} : r_c^{\eta_c} \leq \frac{P_u}{r_o}\right\}$, can be characterized via the following lemma.

\begin{lemma}
\label{lem3}
In a single-tier Poisson D2D-enabled cellular network with truncated channel inversion power control with cutoff threshold $\rho_o$, the {\em pdf} of the uplink transmit power of a generic covered non-potential UE (i.e., in case $\#2$) is given by
 
\begin{align}
f_{P_2}(x) &=  \frac{2 \pi \lambda x^{\frac{2}{\eta_c}-1} e^{-\pi \lambda \left(\frac{x}{\rho_o}\right)^{\frac{2}{\eta_c}}}}{\eta_c  \rho_o^{\frac{2}{\eta_c}} \left(1-e^{-\pi \lambda \left(\frac{P_u}{\rho_o}\right)^{\frac{2}{\eta_c}}}\right)}, 0\leq x \leq P_u. 
\label{pp2}
\end{align}
The moments of the transmit power can be obtained as 
\begin{equation}
\mathbb{E}\left[P_2^\alpha\right] =   \frac{\rho_o^\alpha \gamma\left(\frac{\alpha \eta_c}{2}+1, \pi \lambda \left(\frac{P_u}{\rho_o}\right)^\frac{2}{\eta_c}\right)}{(\pi \lambda)^{\frac{\alpha \eta_c}{2}}\left(1-e^{-\pi \lambda \left(\frac{P_u}{\rho_o}\right)^{\frac{2}{\eta_c}}}\right)}.
\end{equation}  
\end{lemma}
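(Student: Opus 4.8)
The plan is to derive the distribution of $\tilde{r}_c$ first, then push it through the deterministic power map $P_2 = \rho_o \tilde{r}_c^{\eta_c}$, and finally integrate against $x^\alpha$ for the moments. The natural starting point is the nearest-BS distance distribution for a PPP of intensity $\lambda$: since the probability that no BS falls within distance $r$ of the UE is the void probability $e^{-\pi\lambda r^2}$, the cdf of the unconditioned cellular distance is $\mathbb{P}\{r_c \leq r\} = 1 - e^{-\pi\lambda r^2}$, and hence $f_{r_c}(r) = 2\pi\lambda r\, e^{-\pi\lambda r^2}$ for $r \geq 0$.

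Next I would impose the coverage (no-truncation) condition. A case-$\#2$ UE is by definition covered, meaning $r_c \leq (P_u/\rho_o)^{1/\eta_c}$, so $\tilde{r}_c$ is just $r_c$ conditioned on this event. The conditioning probability is obtained by integrating $f_{r_c}$ up to $(P_u/\rho_o)^{1/\eta_c}$, which gives exactly the normalizing factor $1 - e^{-\pi\lambda(P_u/\rho_o)^{2/\eta_c}}$ appearing in the denominator of both claimed expressions. Dividing $f_{r_c}$ by this constant and restricting the support to $[0,(P_u/\rho_o)^{1/\eta_c}]$ yields $f_{\tilde{r}_c}$.

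To obtain $f_{P_2}$ I would apply the change-of-variables formula to the monotone map $x = \rho_o r^{\eta_c}$, whose inverse is $r = (x/\rho_o)^{1/\eta_c}$ with Jacobian $|dr/dx| = x^{1/\eta_c - 1}/(\eta_c \rho_o^{1/\eta_c})$. Substituting into $f_{\tilde{r}_c}$ and collecting the powers of $x$ and $\rho_o$ recovers the stated pdf on $0 \leq x \leq P_u$. For the $\alpha$th moment it is cleaner to work directly with $\tilde{r}_c$: writing $\mathbb{E}[P_2^\alpha] = \rho_o^\alpha\, \mathbb{E}[\tilde{r}_c^{\alpha\eta_c}]$ and substituting $t = \pi\lambda r^2$ converts $\int_0^{(P_u/\rho_o)^{1/\eta_c}} 2\pi\lambda\, r^{\alpha\eta_c+1} e^{-\pi\lambda r^2}\,dr$ into $(\pi\lambda)^{-\alpha\eta_c/2}\int_0^{\pi\lambda(P_u/\rho_o)^{2/\eta_c}} t^{\alpha\eta_c/2} e^{-t}\,dt$, which is precisely $(\pi\lambda)^{-\alpha\eta_c/2}\,\gamma\!\left(\frac{\alpha\eta_c}{2}+1,\ \pi\lambda(P_u/\rho_o)^{2/\eta_c}\right)$ by the definition of the lower incomplete gamma function given in Lemma~\ref{lem1}.

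There is no deep obstacle here; the entire argument is a one-dimensional transformation of a Rayleigh-type variable. The only points requiring care are keeping the truncation normalization consistent between the pdf and the moments (the same factor $1 - e^{-\pi\lambda(P_u/\rho_o)^{2/\eta_c}}$ must appear in both), and matching the shape parameter of $\gamma$ correctly as $\frac{\alpha\eta_c}{2}+1$ rather than $\frac{\alpha\eta_c}{2}$, which hinges on the extra factor of $r$ supplied by the PPP distance density.
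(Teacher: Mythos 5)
Your proposal is correct and takes essentially the same route as the paper's proof: both start from the Rayleigh nearest-BS distance $f_{r_c}(r)=2\pi\lambda r\,e^{-\pi\lambda r^2}$, apply the monotone map $P_2=\rho_o r_c^{\eta_c}$ with the truncation $P_2\leq P_u$ supplying the normalizing factor $1-e^{-\pi\lambda\left(P_u/\rho_o\right)^{2/\eta_c}}$, and obtain the moments by direct integration. Whether you condition in the distance domain before transforming (as you do) or normalize the power pdf after transforming (as the paper does) is immaterial, and your substitution $t=\pi\lambda r^2$ yields exactly the stated lower incomplete gamma expression with shape parameter $\frac{\alpha\eta_c}{2}+1$.
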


\begin{proof}\hspace{-1mm}:
According to the PPP assumption of the locations of the BSs, the cellular link distance $r_c$ is Rayleigh distributed with the {\em pdf} $f_{r_c}(r)=2 \pi \lambda r e^{-\lambda \pi r^2}$  $0 \leq r$ \cite{trac}. Due to the truncated channel inversion power control, the transmit power of the cellular UE should be $P_2 = \rho_o r_c^{\eta_c}$ and $P_2 \leq P_u$. Hence, the {\em pdf} of the transmit power can be obtained as $f_{P_2}(x) = \frac{\frac{2 \pi \lambda x^{\frac{2}{\eta_c}-1} e^{-\pi \lambda \left(\frac{x}{\rho_o}\right)^{\frac{2}{\eta_c}}}}{\eta_c  \rho_o^{\frac{2}{\eta_c}}}}{\int_0^{P_u}\frac{2 \pi \lambda y^{\frac{2}{\eta_c}-1} e^{-\pi \lambda \left(\frac{y}{\rho_o}\right)^{\frac{2}{\eta_c}}}}{\eta_c  \rho_o^{\frac{2}{\eta_c}}}dy}$, $0 \leq x \leq \infty$, which gives (\ref{pp2}). The moments of $P_2$ is obtained as $\int_0^{P_u} x^\alpha f_{P_2}(x) dx$.

\end{proof}

Let $\tilde{P}_4$ be the conditional transmit power for a potential D2D UE operating in the cellular mode (i.e., conditioning on the mode selection for a UE in case $\#4$), and $\tilde{\tilde{r}}_c \eqdef \left\{\tilde{r}_c : \tilde{r}_c^{-\eta_c} \leq T_d r_d^{-\eta_d}\right\}$ be the conditional cellular link distance of a UE in case $\# 4$ operating in the cellular mode\footnote{The notation $\tilde{\tilde{r}}_c$ is selected to reflect the two conditions for the considered cellular link distance, namely, conditioning on being in case $\#4$ and conditioning on the cellular mode operation.}. The transmit power $\tilde{P}_4 = \rho_o \tilde{\tilde{r}}_c^{\eta_c} = \left\{\rho_o r_c^{\eta_c} : r_c^{\eta_c} \leq \frac{P_u}{\rho_o} \cap r_c^{-\eta_c} > T_d r_d^{-\eta_d} \right\} $ can be characterized via the following lemma.

\begin{lemma}
\label{lem4}
In a single-tier Poisson D2D-enabled cellular network with truncated channel inversion power control with cutoff threshold $\rho_o$ and bias factor $T_d$, the {\em pdf} of the transmit power of a generic potential D2D UE (i.e., in case $\#4$), which is operating in the cellular mode, is given by
 
\small
\begin{align}
f_{\tilde{P}_4}(x) & =  \frac{2 \pi \lambda x^{\frac{2}{\eta_c}-1} \left({P}_u^\frac{2}{\eta_d} - (T_dx)^\frac{2}{\eta_d}\right) e^{-\pi \lambda \left(\frac{x}{\rho_o}\right)^{\frac{2}{\eta_c}}}}{\mathbb{P}\left\{\tilde{r}_c^{\eta_c} < \frac{r_d^{\eta_d}}{T_d}\right\} \eta_c  \rho_o^{\frac{2}{\eta_c}} P_u^\frac{2}{\eta_d} \left(1-e^{-\pi \lambda \left(\frac{P_u}{\rho_o}\right)^{\frac{2}{\eta_c}}}\right)}, \notag \\
& \quad \quad \quad \quad  \quad \quad  \quad \quad \quad \quad   \quad \quad   \quad \quad 0 \leq x \leq \frac{P_u}{\max(T_d,1)} 
\end{align}
\normalsize
where 

\small
\begin{align}
\mathbb{P}\left\{\tilde{r}_c^{\eta_c} < \frac{r_d^{\eta_d}}{T_d}\right\}&= 1 - \min\left(1,T_d^\frac{2}{\eta_d}\right) + \frac{\min\left(1,T_d^\frac{2}{\eta_d}\right)}{{\left(1-e^{-\pi \lambda \left(\frac{P_u}{\rho_o}\right)^{\frac{2}{\eta_c}}}\right)}}\notag \\
&- \frac{\eta_c(T_d \rho_o)^{\frac{2}{\eta_d}} \gamma\left(\frac{\eta_c}{\eta_d},\pi \lambda \left(\frac{P_u}{\max(T_d,1)\rho_o}\right)^{\frac{2}{\eta_c}}\right)}{\eta_d P_u^\frac{2}{\eta_d } (\pi  \lambda)^{\frac{\eta_c}{\eta_d}} {\left(1-e^{-\pi \lambda \left(\frac{P_u}{\rho_o}\right)^{\frac{2}{\eta_c}}}\right)} }.  
\label{pp3}
\end{align}

\normalsize
The moments of the transmit power $\tilde{P}_4$ can be obtained as 

\small
\begin{align}
\mathbb{E}[{\tilde{P}_4}^\alpha] &= \frac{1}{\mathbb{P}\left\{\tilde{r}_c < \frac{r_d}{T_d}\right\}}\left[\frac{\rho_o^\alpha \gamma\left(\frac{\alpha \eta_c}{2}+1, \pi \lambda \left(\frac{P_u}{\max(T_d,1) \rho_o}\right)^\frac{2}{\eta_c}\right)}{(\pi \lambda)^{\frac{\alpha \eta_c}{2}}\left(1-e^{-\pi \lambda \left(\frac{P_u}{\rho_o}\right)^{\frac{2}{\eta_c}}}\right)} \right. \notag \\ 
&\left.-\frac{T_d^\frac{2}{\eta_d} \rho_o^{\alpha+\frac{\eta_d}{2}} \gamma\left(\frac{\eta_c}{\eta_d}+ \frac{\alpha \eta_c}{2}+1, \pi \lambda \left(\frac{P_u}{\max(T_d,1) \rho_o}\right)^\frac{2}{\eta_c}\right)}{(\pi \lambda)^{\frac{\eta_c}{\eta_d}+\frac{\alpha \eta_c}{2}} P_u^\frac{2}{\eta_d} \left(1-e^{-\pi \lambda \left(\frac{P_u}{\rho_o}\right)^{\frac{2}{\eta_c}}}\right)}\right].
\end{align}
\normalsize


\end{lemma}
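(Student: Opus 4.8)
The plan is to build $f_{\tilde{P}_4}$ from the already-derived covered-UE power density $f_{P_2}$ of Lemma~\ref{lem3} by inserting the extra conditioning event ``a case~$\#4$ UE selects the cellular mode'' as a thinning factor, and then renormalizing. Recall that $r_c$ is Rayleigh with $f_{r_c}(r)=2\pi\lambda r e^{-\pi\lambda r^2}$, while the D2D distance of a potential D2D UE (after power-control truncation) is uniform on the disc of radius $R$, i.e.\ $f_{r_d}(s)=2s/R^2$ on $[0,R]$ with $R^{\eta_d}=P_u/\rho_o$; the two distances are independent. Since $f_{P_2}$ already encodes the coverage constraint $r_c^{\eta_c}\le P_u/\rho_o$, it only remains to impose the mode-selection event $r_c^{\eta_c}<r_d^{\eta_d}/T_d$.

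First I would compute, conditionally on the transmit power $x=\rho_o r_c^{\eta_c}$, the probability that the UE picks the cellular mode. By independence this is $\mathbb{P}\{r_d^{\eta_d}>T_d r_c^{\eta_c}\}=\mathbb{P}\{r_d>(T_d x/\rho_o)^{1/\eta_d}\}$. Using the uniform-disc law $F_{r_d}(s)=s^2/R^2$ together with $R^2=(P_u/\rho_o)^{2/\eta_d}$, this equals $1-(T_d x)^{2/\eta_d}/P_u^{2/\eta_d}$ whenever the threshold does not exceed $R$ (equivalently $x\le P_u/T_d$) and vanishes otherwise. Intersecting this with the coverage bound $x\le P_u$ yields exactly the support $0\le x\le P_u/\max(T_d,1)$. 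Multiplying $f_{P_2}(x)$ by this thinning factor and dividing by its integral reproduces the stated $f_{\tilde{P}_4}$, with normalizing constant $\mathbb{P}\{\tilde{r}_c^{\eta_c}<r_d^{\eta_d}/T_d\}=\int_0^{P_u/\max(T_d,1)} f_{P_2}(x)\big(1-(T_dx)^{2/\eta_d}/P_u^{2/\eta_d}\big)\,dx$.

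Both the normalizing constant and the moments then reduce to integrals of the form $\int_0^{U} x^{\beta} f_{P_2}(x)\,dx$ with $U=P_u/\max(T_d,1)$. The substitution $t=\pi\lambda(x/\rho_o)^{2/\eta_c}$ turns each such integral into a lower incomplete gamma function $\gamma(\beta\eta_c/2+1,\,\pi\lambda(U/\rho_o)^{2/\eta_c})$, which delivers the two gamma terms of $\mathbb{E}[\tilde{P}_4^\alpha]$ directly (taking $\beta=\alpha$ and $\beta=\alpha+2/\eta_d$). The main obstacle is casting the normalizing constant into the stated closed form: the $x^{2/\eta_d}$-weighted integral produces $\gamma(\eta_c/\eta_d+1,\cdot)$, so I would apply the recurrence $\gamma(a+1,b)=a\,\gamma(a,b)-b^a e^{-b}$ to convert it to the $\gamma(\eta_c/\eta_d,\cdot)$ appearing in~(\ref{pp3}); the leftover boundary term $b^a e^{-b}$ combines with the first integral to produce the $\min(1,T_d^{2/\eta_d})$ contributions, while the case split $T_d\gtrless 1$ is precisely what the factors $\max(T_d,1)$ and $\min(1,T_d^{2/\eta_d})$ encode. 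Careful bookkeeping of this regime split is the only delicate part; everything else is routine substitution.
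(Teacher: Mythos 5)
Your proposal is correct and follows essentially the same route as the paper's own proof: the paper likewise obtains $f_{\tilde P_4}$ by thinning the case-$\#2$ power density with the independent mode-selection event (its inner integral $\int_x^{P_u/T_d} f_{X_d/T_d}(y)\,dy$ is exactly your survival factor $1-(T_d x/P_u)^{2/\eta_d}$, yielding the same support $[0,P_u/\max(T_d,1)]$), and then reduces the normalization and moments to lower incomplete gamma integrals. The only cosmetic difference is that the paper computes $\mathbb{P}\{\tilde X_c < X_d/T_d\}$ by integrating over $X_d$ with an explicit case split $T_d\le 1$ versus $T_d>1$ before merging into (\ref{pp3}), whereas you integrate over $\tilde X_c$ and recover the $\min/\max$ structure via the recurrence $\gamma(a+1,b)=a\,\gamma(a,b)-b^a e^{-b}$ — a Fubini-equivalent computation whose boundary term indeed produces $\min(1,T_d^{2/\eta_d})$; note in passing that your exponent $\rho_o^{\alpha+2/\eta_d}$ in the second moment term is the correct one (the lemma statement's $\rho_o^{\alpha+\eta_d/2}$ is a typo).
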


\begin{proof}\hspace{-1mm}:
See \textbf{Appendix \ref{proof4}}.
\end{proof}
We end this section with the following corollary, which gives the moments of the transmit power of a generic UE operating in the cellular mode.

\begin{corollary}
\label{col1}
The moments of the transmit power of a generic user operating in the cellular mode is given by
\small
\begin{align}
\mathbb{E}[P_c^\alpha] 
&= (1- \frac{p \mathcal{P}_d \mathcal{D}}{\mathcal{U}} ) \mathbb{E}\left[ P_2^\alpha\right]  +  \frac{p \mathcal{P}_d \mathcal{D}}{\mathcal{U}} \mathbb{E}\left[ \tilde{P}_4^\alpha\right]
\end{align}
\normalsize
where $\mathbb{E}\left[ P_2^\alpha\right]$ and $\mathbb{E}\left[ \tilde{P}_4^\alpha\right]$ are given in \textbf{Lemma~\ref{lem3}} and \textbf{Lemma~\ref{lem4}}, respectively. 
\end{corollary}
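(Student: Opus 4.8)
The plan is to read the result off the law of total expectation after partitioning the cellular-mode UEs into the two disjoint populations that Fig.~\ref{users} allows to occupy that mode. Following the classification, a UE transmits in the cellular mode in exactly one of two ways: either it is a covered, non-potential-D2D UE (case~$\#2$), whose transmit power $P_2$ is characterized in \textbf{Lemma~\ref{lem3}}; or it is a covered potential-D2D UE (case~$\#4$) for which the mode-selection test $T_d r_d^{-\eta_d} \geq r_c^{-\eta_c}$ fails, so that it is routed to the cellular mode with conditional transmit power $\tilde{P}_4$ characterized in \textbf{Lemma~\ref{lem4}}. Since these two events are disjoint and together exhaust the cellular mode, conditioning a generic cellular UE on which population it comes from gives $\mathbb{E}[P_c^\alpha] = w_2\,\mathbb{E}[P_2^\alpha] + w_4\,\mathbb{E}[\tilde{P}_4^\alpha]$ with $w_2 + w_4 = 1$, and the two conditional moments $\mathbb{E}[P_2^\alpha]$ and $\mathbb{E}[\tilde{P}_4^\alpha]$ are already supplied in closed form by \textbf{Lemmas~\ref{lem3}} and~\textbf{\ref{lem4}}. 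Thus only the mixing weights remain to be identified.

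To evaluate the weights I would exploit the PPP structure and the independent thinning that defines the four cases. Coverage (the truncation event) depends only on the cellular link distance $r_c$, while being a potential D2D UE depends only on the D2D link distance $r_d$; hence each of the four sub-populations is again a PPP whose intensity is the product of $\mathcal{U}$ with the corresponding thinning probabilities. I would then invoke \textbf{Lemma~\ref{lem1}}, which already counts the potential-D2D branch of the mode-selection decision: the intensity of D2D links is $p\mathcal{P}_d\mathcal{D}$ against a total UE intensity $\mathcal{U}$, so the fraction of the population carried by the potential-D2D branch works out to $w_4 = \frac{p\mathcal{P}_d\mathcal{D}}{\mathcal{U}}$ and the complementary (pure-cellular) branch to $w_2 = 1-\frac{p\mathcal{P}_d\mathcal{D}}{\mathcal{U}}$. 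Substituting $w_2$, $w_4$ together with the moments from \textbf{Lemmas~\ref{lem3}} and~\textbf{\ref{lem4}} produces the stated expression.

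The step I expect to be the real obstacle is the bookkeeping of these weights, not any fresh integral. The difficulty is that inside case~$\#4$ the coverage event and the mode-selection outcome are correlated through the same random distance $r_c$, so $w_4$ cannot be written naively as a product of an unconditional coverage probability and an unconditional selection probability. The clean way to avoid double counting is to let \textbf{Lemma~\ref{lem1}} absorb this correlation (its intensity $p\mathcal{P}_d\mathcal{D}$ is exactly the head count of UEs for which the biased D2D link wins the comparison) and to let the normalizing factor $\mathbb{P}\{\tilde{r}_c^{\eta_c} < r_d^{\eta_d}/T_d\}$ already built into $\tilde{P}_4$ in \textbf{Lemma~\ref{lem4}} handle the within-branch conditioning, so that the two conditioning effects are accounted for once and only once. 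Once the weights are pinned down this way, the corollary follows immediately.
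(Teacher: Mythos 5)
Your skeleton is the paper's proof almost verbatim: the paper likewise writes $P_c=\mathbbm{1}_{\left\{\text{user in case } \#2\right\}}P_2+\mathbbm{1}_{\left\{\text{user in case } \#4\right\}}\tilde{P}_4$, uses the mutual exclusivity of the two events to get $\mathbb{E}[P_c^\alpha]=\mathbb{P}\left\{\text{case } \#2\right\}\mathbb{E}[P_2^\alpha]+\mathbb{P}\left\{\text{case } \#4\right\}\mathbb{E}[\tilde{P}_4^\alpha]$, and pulls the two conditional moments from \textbf{Lemma~\ref{lem3}} and \textbf{Lemma~\ref{lem4}}. So your decomposition and the law-of-total-expectation step coincide exactly with the paper's argument.

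The genuine gap is in your identification of the mixing weight $w_4$. The branch weighted by $\mathbb{E}[\tilde{P}_4^\alpha]$ consists of case-$\#4$ UEs that \emph{fail} the biased test $T_d r_d^{-\eta_d}\geq r_c^{-\eta_c}$ and therefore stay in the cellular mode, whereas \textbf{Lemma~\ref{lem1}}'s intensity $p\mathcal{P}_d\mathcal{D}$ counts precisely the UEs that \emph{pass} the test and leave for the D2D mode. Your justification --- that $p\mathcal{P}_d\mathcal{D}$ is ``exactly the head count of UEs for which the biased D2D link wins the comparison,'' hence $w_4=p\mathcal{P}_d\mathcal{D}/\mathcal{U}$ --- assigns to the cellular branch the size of its complement, which is a non sequitur. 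A derivation consistent with your own thinning logic would instead give a weight proportional to $p\mathcal{D}(1-\mathcal{P}_d)$ together with the coverage factor, normalized by the total intensity of cellular-mode transmitters; this is in fact the bookkeeping the paper itself uses in~(\ref{minrate1}), where the potential D2D UEs operating in cellular mode are counted as $(1-\mathcal{P}_d)p\mathcal{D}$. You land on the corollary's stated coefficients only because the corollary asserts them: the paper's proof also leaves $\mathbb{P}\left\{\text{case }\#2\right\}$ and $\mathbb{P}\left\{\text{case }\#4\right\}$ symbolic and never computes them, so your step reproduces the statement without actually justifying it. Nor does the ``let Lemma~\ref{lem1} absorb the correlation'' device repair this: $\mathcal{P}_d$ in Lemma~\ref{lem1} is computed without conditioning on coverage and for the opposite mode-selection event, so it cannot serve as the weight of the case-$\#4$ cellular subpopulation.
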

\begin{proof}\hspace{-1mm}:
Since the two events (i.e., being a non-potential cellular UE [i.e., in case $\#2$] and being a potential cellular UE [i.e., in case $\#4$]) are mutually exclusive, the transmit power of a generic user in the cellular mode can be expressed as $P_c =\mathbbm{1}_{\left\{\text{user is in case $\#2$}\right\}} P_2 + \mathbbm{1}_{\left\{\text{user is in case $\#4$}\right\}} \tilde{P}_4$. Hence, $\mathbb{E}[P_c^\alpha] = \mathbb{E}\left[\left(\mathbbm{1}_{\left\{\text{user is in case $\#2$}\right\}} P_2 + \mathbbm{1}_{\left\{\text{user is in case $\#4$}\right\}} \tilde{P}_4\right)^\alpha\right]$. Exploiting the mutually exclusive property of the two events, we have $\mathbb{E}[P_c^\alpha]= \mathbb{P}\left\{\text{user is in case $\#2$}\right\} \mathbb{E}\left[ P_2^\alpha\right] + \mathbb{P}\left\{\text{user is in case $\#4$}\right\} \mathbb{E}\left[\tilde{ P}_4^\alpha\right] $ which proves the corollary. 
\end{proof}

\section{Analysis of ${\rm SINR}$} \label{ana}

Consider an arbitrary UE operating in mode $\chi \in \left\{c,d\right\}$, where $c$ and $d$ denote the cellular mode and D2D mode, respectively. The {\rm SINR} experienced at the receiver associated with the test UE, which can be located in an arbitrary location $y \in \mathbb{R}^2$, can be written as
\footnotesize
\beq
{\rm SINR}_\chi = \frac{ \rho_o h_o}{\sigma^2+ \underbrace{\sum_{u_i\in \tilde{\mathbf{\Phi}}_c} P_{c_i} h_i \left\|y - u_i\right\|^{-\eta_c} }_{\mathcal{I}_{c\chi}}+ \underbrace{\sum_{u_j\in \tilde{\mathbf{\Phi}}_d} P_{d_j} h_j \left\|y-u_j\right\|^{-\eta_d}}_{\mathcal{I}_{d\chi}} }
\eeq
\normalsize
where the intended signal is always equal to $\rho_o h_o$ due to the employed power control. The random variables $\mathcal{I}_{c\chi}$ and $\mathcal{I}_{d\chi}$ denote the aggregate interference caused to the link under investigation (which operates in mode $\chi$) from the UEs operating in the cellular mode (who are represented by the PP $\tilde{\mathbf{\Phi}}_c$) and the UEs operating in the D2D mode (who are represented by the PP $\tilde{\mathbf{\Phi}}_d$), respectively. 
The ${\rm SINR}$ outage probability can be calculated as

\begin{align}
\mathbb{P}\left\{{\rm SINR}_\chi \leq \theta \right\} &= \mathbb{P} \left\{\rho_o h_o \leq \theta \left(\sigma^2+ \mathcal{I}_{c\chi} + \mathcal{I}_{d\chi} \right) \right\} \notag \\
 &= 1-\exp\left\{ -\frac{\theta}{\rho_o} \left(\sigma^2+ \mathcal{I}_{c\chi} + \mathcal{I}_{d\chi} \right) \right\} \notag \\
&= 1-\exp\left\{ -\frac{\theta}{\rho_o} \sigma^2\right\} \mathcal{L}_{\mathcal{I}_{c\chi}}\left(\frac{\theta}{\rho_o}\right) \mathcal{L}_{\mathcal{I}_{d\chi}}\left(\frac{\theta}{\rho_o}\right) 
\label{SINR1}
\end{align}
where the second equality follows from the exponential distribution of $h_o$ and $\mathcal{L}_X(.)$ denotes the Laplace transform\footnote{Hereafter, $\mathcal{L}_X(.)$ will be denoted as the LT of $X$.} (LT) of the {\em pdf} of the random variable $X$. 

To find the exact distribution of the $\rm SINR$ we have to find the LT for the interference experienced by the test receiver from both the PPs $\tilde{\mathbf{\Phi}}_c$ and $\tilde{\mathbf{\Phi}}_d$. Although we have assumed that the complete set of UEs is modeled via a PPP, as has been discussed earlier, neither $\tilde{\mathbf{\Phi}}_c$ nor $\tilde{\mathbf{\Phi}}_d$ is a PPP. In particular, the set of interfering cellular UEs $\tilde{\mathbf{\Phi}}_c$ can be modeled via a softcore process (e.g., Strauss process) and the set of interfering D2D UEs $\tilde{\mathbf{\Phi}}_d$ can be modeled via a Poisson hole process with random radii of the holes. Note that although the set of all cellular UEs on all channels constitute a Poisson cluster process (i.e., clustered around the BSs due to the truncation outage probability), the set of interfering UEs (i.e., UEs operating on the same channel) constitutes a softcore point process due to the orthogonal frequency assignment within the same cell. Hence, only one UE is allowed to use a given frequency channel within the Voronoi cell of its serving BS. Unfortunately, neither the Poisson hole process nor the 
softcore point process is analytically tractable \cite{martin-book, pp-cellular}. For analytical tractability, we approximate each of the point processes with an equi-dense PPP. Note that in our approximation we will ignore the mutual correlations between the positions of the simultaneously active UEs. However, the correlations between the interfering UEs and the test transmitter and receiver are captured by accounting for the interference protection induced  by the system model. The assumptions are stated formally as follows.     

\begin{assumption}
The set of interfering UEs operating in the cellular mode ($\tilde{\mathbf{\Phi}}_c$) constitutes a PPP and that the transmit powers of the UEs are independent.
\label{ass1}
\end{assumption}
Note that, as discussed in \cite{and-uplink}, the dependence between the neighboring Voronoi cells imposes a weak correlation among the cellular link distances, and hence, transmit powers of the UEs are correlated due to the power control policy.  

\begin{assumption}
The set of interfering UEs operating in the D2D mode ($\tilde{\mathbf{\Phi}}_d$) constitutes a PPP.
\label{ass2}
\end{assumption}

Exploiting \textbf{Assumptions~\ref{ass1}} and \textbf{\ref{ass2}}, simple yet accurate approximations for the distribution of ${\rm SINR}$ can be obtained. It is worth mentioning that the PPP approximation for different point processes was shown to be accurate in estimating the aggregate interference and outage, if the density of the PPP and the interference exclusion region around the test receiver are carefully estimated \cite{our-survey, letter, our-MHCPP, pp-cellular}. The accuracy of \textbf{Assumptions~\ref{ass1}} and \textbf{\ref{ass2}} will be validated in Sec.~\ref{res}.

From the aforementioned assumptions, the outage probability for a generic D2D link is given by the following theorem.

\begin{theorem} \label{th1}
In a single-tier D2D-enabled cellular network with BS intensity $\lambda$, D2D biasing factor $T_d$, and truncated channel inversion power control with cutoff threshold $\rho_o$, the outage probability for a generic D2D UE is given by equation~(\ref{d2dout}).
\end{theorem}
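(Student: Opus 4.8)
The plan is to start from the general outage expression~(\ref{SINR1}) specialized to the D2D mode ($\chi=d$), which already reduces the problem to evaluating the two Laplace transforms $\mathcal{L}_{\mathcal{I}_{cd}}(\theta/\rho_o)$ and $\mathcal{L}_{\mathcal{I}_{dd}}(\theta/\rho_o)$ of the cellular and D2D aggregate interference seen at the test D2D receiver. A simplification I would exploit at the outset is that, because of the truncated channel-inversion power control, the useful signal is pinned at $\rho_o h_o$ and is therefore independent of the random test link distance $\tilde{r}_d$; hence no outer averaging over the serving-link geometry is needed and the entire dependence on the network topology is carried by the two interference fields.

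Next I would invoke \textbf{Assumption~\ref{ass1}} and \textbf{Assumption~\ref{ass2}} to replace the softcore cellular interferer field $\tilde{\mathbf{\Phi}}_c$ and the Poisson-hole D2D field $\tilde{\mathbf{\Phi}}_d$ by equi-dense PPPs, and use the independence of the transmit powers (granted by \textbf{Assumption~\ref{ass1}}) to decouple each UE's power from its location. I would take the intensity of the cellular interferers to be $\lambda$ (one scheduled uplink per cell on the tagged channel) and that of the D2D interferers to be $p\mathcal{D}\mathcal{P}_d$ from \textbf{Lemma~\ref{lem1}}. For each field I would then apply the standard Laplace functional of a PPP, writing, for a generic mark $(P,\eta)$,
\[ \mathcal{L}_{\mathcal{I}}(s) = \exp\left(-2\pi\lambda_{\mathrm{int}}\int_{\underline{r}}^{\infty}\mathbb{E}_{P,h}\!\left[1 - e^{-sPh r^{-\eta}}\right] r\, dr\right), \]
and carrying out the fading average first. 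With $h$ a unit-mean exponential this collapses the inner expectation to $\mathbb{E}_{P}\!\left[\tfrac{sP r^{-\eta}}{1 + sP r^{-\eta}}\right]$, a rational integrand in $r$.

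The heart of the computation is the spatial integral. Swapping the order of the $r$-integration and the power average, the radial integral $\int_{0}^{\infty}\tfrac{sP r^{-\eta}}{1+sP r^{-\eta}}\,r\,dr$ evaluates, after the substitution $u=r^{2}$, to a constant multiple of $(sP)^{2/\eta}$, so that the power average reduces to the fractional moment $\mathbb{E}[P^{2/\eta}]$. Setting $s=\theta/\rho_o$, I would then substitute $\mathbb{E}[P_c^{2/\eta_c}]$ for the cellular field (assembled from \textbf{Lemma~\ref{lem3}}, \textbf{Lemma~\ref{lem4}}, and \textbf{Corollary~\ref{col1}} at $\alpha=2/\eta_c$) and $\mathbb{E}[P_d^{2/\eta_d}]$ for the D2D field (from \textbf{Lemma~\ref{lem2}} at $\alpha=2/\eta_d$). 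Because the power {\em pdf}s already encode the $P\le P_u$ truncation, these moments are exactly the quantities required, and multiplying the two resulting exponentials by $e^{-\theta\sigma^2/\rho_o}$ yields~(\ref{d2dout}).

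The step I expect to be the main obstacle is pinning down the lower limit $\underline{r}$, i.e.\ the interference-protection (exclusion) region around the test D2D receiver for each interferer class, and verifying that the power/location decoupling is legitimate there. Unlike the cellular uplink, where nearest-BS association couples an interferer's power to its distance from the tagged BS and thereby carves out a guard zone, the D2D receiver is not an association point, so I would have to argue carefully whether any guard region is induced by the mode-selection rule $T_d r_d^{-\eta_d}\ge r_c^{-\eta_c}$ together with the power cap, or whether $\underline{r}=0$ is admissible (the integral still converging for $\eta>2$). Getting this region right, and reconciling it with the independence asserted in \textbf{Assumption~\ref{ass1}}, is what makes the clean closed form attainable; the remaining manipulations are routine incomplete-gamma bookkeeping.
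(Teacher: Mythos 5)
Your overall pipeline is the same as the paper's: specialize (\ref{SINR1}) to $\chi=d$, invoke \textbf{Assumptions~\ref{ass1}} and \textbf{\ref{ass2}} to replace $\tilde{\mathbf{\Phi}}_c$ and $\tilde{\mathbf{\Phi}}_d$ by equi-dense PPPs (intensity $\lambda$ for cellular interferers, $\tilde{\mathcal{U}}_d/\left|\mathbf{S}\right|$ for D2D interferers), apply the PPP Laplace functional with the Rayleigh average done first, and reduce each interference term to a fractional moment of the transmit power. However, there is one genuine error that would make your final expression disagree with (\ref{d2dout}) whenever $\eta_c\neq\eta_d$: you evaluate the cellular-interference term with path-loss exponent $\eta_c$ and substitute $\mathbb{E}[P_c^{2/\eta_c}]$ at $\alpha=2/\eta_c$. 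The path-loss exponent is a property of the propagation link, not of the interferer's serving mode; interference from cellular-mode UEs arrives at the test D2D receiver over UE-to-UE channels, and the paper explicitly uses $\eta_d$ for any link between two UEs in this derivation. Consequently, the cellular term in (\ref{d2dout}) is $\pi\lambda\left(\theta/\rho_o\right)^{2/\eta_d}\mathbb{E}\left[\tilde{P}_c^{2/\eta_d}\right]\Gamma\left(1+\frac{2}{\eta_d}\right)\Gamma\left(1-\frac{2}{\eta_d}\right)$, i.e., the correct substitution is the $2/\eta_d$ fractional moment of $P_c$ (assembled from \textbf{Lemma~\ref{lem3}}, \textbf{Lemma~\ref{lem4}}, and \textbf{Corollary~\ref{col1}} at $\alpha=2/\eta_d$), with the free-space decay $r^{-\eta_d}$ in the radial integral. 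Note that $\eta_c$ \emph{does} appear in the cellular-receiver result (\ref{cellout1}), which is precisely why your $\eta_c$-based term looks plausible but belongs to \textbf{Theorem~\ref{th2}}, not here.

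On the issue you flagged as the main obstacle — the lower integration limit $\underline{r}$ — the paper's resolution is the simple one you suspected might be admissible: $\underline{r}=0$ for \emph{both} interference fields. The guard zones induced by the mode-selection rule $T_d r_d^{-\eta_d}\geq r_c^{-\eta_c}$ together with channel inversion protect the \emph{BSs} (interference from any D2D transmitter at any BS is bounded by $T_d\rho_o$), not the D2D receivers, which are not association points; the paper states outright that there is no interference protection around D2D receivers. This is exactly why (\ref{d2dout}) comes out in fully closed form with $\Gamma(1+\frac{2}{\eta_d})\Gamma(1-\frac{2}{\eta_d})$ factors, while (\ref{cellout1}) retains truncated integrals reflecting exclusion regions at the BS. Convergence at $\underline{r}=0$ holds after the fading average since the integrand behaves like $r$ near the origin and like $r^{1-\eta_d}$ at infinity for $\eta_d>2$, as you noted, and Slivnyak's theorem justifies placing the test receiver at the origin. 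With the exponent fixed and $\underline{r}=0$ adopted, the rest of your derivation coincides with the paper's.
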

\begin{proof}\hspace{-1mm}:
See \textbf{Appendix~\ref{pth1}}.
\end{proof}
The outage probability for a generic cellular uplink is given by the following theorem.

\begin{theorem} \label{th2}
In a single-tier D2D-enabled cellular network with BS intensity of $\lambda$, D2D biasing factor $T_d$, and truncated channel inversion power control with cutoff threshold $\rho_o$, the outage probability for a generic cellular UE is given by equation~(\ref{cellout1}).
\end{theorem}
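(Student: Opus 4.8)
The plan is to mirror the argument behind \textbf{Theorem~\ref{th1}}, but with the test receiver being the serving BS rather than a D2D receiver. First I would specialize the generic outage expression~(\ref{SINR1}) to $\chi=c$, so that the cellular outage probability equals $1-\exp(-\theta\sigma^2/\rho_o)\,\mathcal{L}_{\mathcal{I}_{cc}}(\theta/\rho_o)\,\mathcal{L}_{\mathcal{I}_{dc}}(\theta/\rho_o)$, where the numerator has already been collapsed to $\rho_o h_o$ by the channel inversion and $h_o$ is exponential. The entire task therefore reduces to evaluating the two Laplace transforms $\mathcal{L}_{\mathcal{I}_{cc}}$ (co-channel cellular interference seen at the test BS) and $\mathcal{L}_{\mathcal{I}_{dc}}$ (D2D interference seen at the test BS), and then de-conditioning on the test link geometry.

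For the cellular interference I would invoke \textbf{Assumption~\ref{ass1}} to treat $\tilde{\mathbf{\Phi}}_c$ as a PPP and apply the probability generating functional. Averaging the exponential fading yields the usual kernel $\frac{1}{1+sP_c v^{-\eta_c}}$, after which I would average over the interferer transmit power $P_c$ using the cellular power law characterized in \textbf{Corollary~\ref{col1}} (through \textbf{Lemma~\ref{lem3}} and \textbf{Lemma~\ref{lem4}}). The delicate point is the interference-exclusion region: nearest-BS association forces every interfering UE to lie no closer to its own serving BS than to the test BS, so its power control guarantees $P_c v^{-\eta_c}\le\rho_o$; I would encode this softcore thinning by choosing the lower limit of the radial PPP integral according to the nearest-BS association constraint (so that interferers effectively lie beyond the tagged link distance $r_c$) and by using the equi-dense intensity of active co-channel cellular UEs (of order $\lambda$, thinned by the probability of cellular-mode operation).

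For the D2D interference I would similarly use \textbf{Assumption~\ref{ass2}} to model $\tilde{\mathbf{\Phi}}_d$ as a PPP, now with intensity $p\mathcal{D}\mathcal{P}_d$ supplied by \textbf{Lemma~\ref{lem1}} and with the per-interferer power distribution $f_{P_d}$ from \textbf{Lemma~\ref{lem2}}; note that the path loss here uses $\eta_d$, as in the definition of $\mathcal{I}_{d\chi}$. The key structural fact is that the mode-selection rule $T_d r_d^{-\eta_d}\ge r_c^{-\eta_c}$ pushes D2D transmitters away from the BSs (the Poisson-hole structure), which caps each D2D interferer's contribution at a generic BS by $T_d\rho_o$. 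I would translate this interference protection into the appropriate lower limit of the radial integral and then carry out the power average against $f_{P_d}$, again reducing the fading to the $\frac{1}{1+sP_d v^{-\eta_d}}$ kernel via the PGFL.

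Finally I would multiply the two Laplace transforms with the noise factor $\exp(-\theta\sigma^2/\rho_o)$ and de-condition on the geometry of the test cellular link, i.e.\ average the $r_c$-dependent expression over the conditional distance distribution of a generic cellular UE (the case-$\#2$/case-$\#4$ mixture), to recover the closed form~(\ref{cellout1}). I expect the main obstacle to be pinning down the two guard-zone lower limits so that they faithfully reproduce the spatial correlations — the softcore thinning of co-channel cellular interferers and the hole structure of the D2D interferers — while keeping the radial integrals in closed form; in particular, making the power average and the exclusion region mutually consistent, rather than double-counting the caps $P_c v^{-\eta_c}\le\rho_o$ and $T_d\rho_o$, is where the care is needed.
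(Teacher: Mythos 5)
Your high-level skeleton does match the paper's: specialize~(\ref{SINR1}) to $\chi=c$, invoke \textbf{Assumptions~\ref{ass1}} and~\textbf{\ref{ass2}} to replace both interfering populations by PPPs, apply the probability generating functional with exponential fading, and encode the interference protection as guard zones. However, three of your steps would fail as written. First, the path-loss exponent: you assert the D2D-to-BS interference propagates with $\eta_d$ ``as in the definition of $\mathcal{I}_{d\chi}$,'' but the receiver here is a BS, so the paper's proof of \textbf{Theorem~\ref{th2}} uses $\eta_c$ for \emph{both} interference terms --- visible in~(\ref{cellout1}) itself, where the D2D term carries $\mathbb{E}\left[P_d^{2/\eta_c}\right]$ and the kernel $y/(y^{\eta_c}+1)$. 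Indeed, the $T_d\rho_o$ cap you rely on is itself derived with $\eta_c$: mode selection gives $r_c^{-\eta_c}\le T_d r_d^{-\eta_d}$, and with $P_d=\rho_o r_d^{\eta_d}$ the power received at the nearest BS is $P_d r_c^{-\eta_c}\le T_d\rho_o$. Using $\eta_d$ you recover neither this cap nor the target equation. Second, your exclusion region for cellular interferers is wrong: you correctly state the cap $P_c v^{-\eta_c}\le\rho_o$ (the paper's \textbf{Fact \#1}), but then translate it into ``interferers lie beyond the tagged link distance $r_c$,'' which is a downlink-style guard zone. The paper's lower integration limit is per-interferer and power-dependent, $v\ge\left(P_c/\rho_o\right)^{1/\eta_c}$, taken \emph{inside} the expectation over $P_c$; after a change of variables it becomes the $r_c$-free limit $(s\rho_o)^{-1/\eta_c}=\theta^{-1/\eta_c}$ appearing in~(\ref{cellout1}).

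Third, the final de-conditioning over $r_c$ you propose is neither needed nor performed in the paper: full channel inversion makes the desired signal exactly $\rho_o h_o$ regardless of the tagged link distance, and both Laplace transforms are $r_c$-free (valid at a generic BS by Slivnyak's theorem), so~(\ref{cellout1}) follows by direct substitution of $s=\theta/\rho_o$ into~(\ref{SINR1}); the case-\#2/case-\#4 mixture enters only through the \emph{interferers'} power moments via \textbf{Corollary~\ref{col1}}. Your proposal introduces spurious $r_c$-dependence precisely because of the mistaken guard zone above, and the two errors compound. A smaller slip: the intensity of interfering cellular UEs is exactly $\lambda$ (\textbf{Fact \#2}: one scheduled UE per BS per channel under the saturation assumption), not $\lambda$ thinned by the cellular-mode probability. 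Your closing concern about double-counting the caps is well placed, but the resolution is the paper's construction: expressing each guard zone directly in terms of the corresponding interferer's own transmit power ($P_c$ with cap $\rho_o$, $P_d$ with cap $T_d\rho_o$) makes the power average and the exclusion region automatically consistent.
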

\begin{proof}\hspace{-1mm}:
See \textbf{Appendix~\ref{pth2}}.
\end{proof}

Note that equation~(\ref{cellout1}) can be reduced to a closed-form for integer values of $\eta_c$. For instance, with $\eta_c=4$, equation~(\ref{cellout1}) reduces to equation~(\ref{cellout2}).

\begin{figure*}
\small
\begin{align}
\mathbb{P}\left\{{\rm SINR_d} \leq \theta \right\}  = 1- \exp\left( -\frac{\theta}{\rho_o} \sigma^2 - \frac{\pi \tilde{\mathcal{U}}_d}{\left|\mathbf{S}\right|} \left( \frac{\theta}{\rho_o}\right)^{\frac{2}{\eta_d}} \mathbb{E}\left[P_d^\frac{2}{\eta_d}\right] \Gamma\left(1+\frac{2}{\eta_d}\right) \Gamma\left(1-\frac{2}{\eta_d}\right) -  { \pi  \lambda \left(\frac{\theta}{\rho_o}\right)^\frac{2}{\eta_d}} \mathbb{E}\left[\tilde{P}_c^{\frac{2}{\eta_d}}\right]  \Gamma\left(1+\frac{2}{\eta_d}\right)  \Gamma\left(1-\frac{2}{\eta_d}\right)\right). 
\label{d2dout}
\end{align}

\begin{align}
\mathbb{P}\left\{{\rm SINR_c} \leq \theta \right\}  =  1- \exp\left(-\frac{\theta}{\rho_o} \sigma^2 - \frac{2 \pi \tilde{\mathcal{U}}_d}{\left|\textbf{S}\right|} \left(\frac{\theta}{\rho_o}\right)^\frac{2}{\eta_c} \mathbb{E}\left[ P_{d}^\frac{2}{\eta_c}\right] \int_{\left(\frac{1}{{\theta}T_d}\right)^\frac{1}{\eta_c}}^{\infty} \left( \frac{y }{y^{\eta_c}+1  }\right) dy - {2 \pi   \lambda} {\left(\frac{\theta}{\rho_o}\right)^\frac{2}{\eta_c}} \mathbb{E}\left[ P_c^\frac{2}{\eta_c}\right] \int_{(\theta )^\frac{-1}{\eta_c}}^{\infty} \frac{x}{x^{\eta_c}+1 }dx  \right). 
\label{cellout1}
\end{align}
\normalsize

\small
\begin{align}
\mathbb{P}\left\{{\rm SINR_c} \leq \theta \right\}  \stackrel{\eta_c=4}{=}  1- \exp\left(-\frac{\theta}{\rho_o} \sigma^2 - \frac{ \pi \tilde{\mathcal{U}}_d}{\left|\textbf{S}\right|} \sqrt{\frac{\theta}{\rho_o}} \mathbb{E}\left[ \sqrt{P_{d}}\right] \arctan\left(\sqrt{\theta T_d }\right)  - { \pi   \lambda}{\sqrt{\frac{\theta}{\rho_o}}} \mathbb{E}\left[ \sqrt{P_c}\right]  \arctan(\sqrt{\theta })\right).
\label{cellout2}
\end{align}

\normalsize

\end{figure*}

The average link capacity for a generic UE operating in mode $\chi$ can be obtained as follows:
\begin{align}
\mathcal{R}_\chi &= \mathbb{E}[\ln\left(1+{\rm SINR}_\chi \right)] \notag \\
&\stackrel{(i)}{=} \int_0^{\infty} \mathbb{P}\left\{\ln\left(1+{\rm SINR}_\chi \right) > t\right\} dt \notag \\
&= \int_0^{\infty} \mathbb{P}\left\{{\rm SINR}_\chi > \left(e^t-1\right)\right\} dt \notag \\
&\stackrel{(ii)}{=} \int_0^{\infty} \exp\left\{ -\frac{\left(e^t-1\right)}{\rho_o} \sigma^2\right\} \mathcal{L}_{\mathcal{I}_{c\chi}}\left(\frac{\left(e^t-1\right)}{\rho_o}\right) \notag \\ 
& \quad \quad \quad \quad \mathcal{L}_{\mathcal{I}_{d\chi}}\left(\frac{\left(e^t-1\right)}{\rho_o}\right) dt 
\label{rate}
\end{align}
where $(i)$ follows because $\ln\left(1+{\rm SINR}\right)$ is a strictly positive random variable. Note that the LTs of the aggregate interference experienced by a generic D2D link and a generic cellular link are derived in \textbf{Appendix~\ref{pth1}} and \textbf{Appendix~\ref{pth2}}, respectively. 

\section{Results and Discussions} \label{res}

\subsection{System Parameters and Model Validation}

We first validate our model by simulations and then present some numerical results for a D2D-enabled cellular network. Unless otherwise stated, we set the BS intensity to $\lambda=5$ BSs/km$^2$, the UE intensity to $\mathcal{U}$= 50 UE/km$^2$ where $50 \%$ of the UEs are potential D2D transmitters, the maximum transmit power $P_u=1$ W, the receiver sensitivity $\rho_{min} = -90$ dBm, the cutoff threshold $\rho=-70$ dBm, the path-loss exponents $\eta_c=\eta_d=4$, the D2D bias factor $T_d =1$, the ${\rm SINR}$ threshold $\theta=1$, $\sigma^2 = -90$ dBm, and the number of channels $\left|\mathbf{S}\right|=1$.

 In the simulation setup, we realize a PPP cellular network with intensity $\lambda$ in a 400 km$^2$ area. Then, we realize the UEs in the simulation area. In each simulation scenario, we first schedule the realized UEs according to the following criterion. A non-potential D2D is scheduled for cellular uplink transmission if and only if, (a) it can invert its channel to the serving (i.e., nearest) BS, and (b) there is no other scheduled cellular user in the same Voronoi cell. A potential D2D transmitter that selects the D2D mode of operation is always scheduled for D2D transmission. However, if a potential D2D transmitter selects the cellular mode, it is scheduled for transmission according to the  criterion for selection of cellulatr mode (i.e., if the aferomentioned conditions (a) and (b) are satisfied). After scheduling all of the realized UEs, if there are still idle BSs (i.e., BSs which are not serving any cellular users), to ensure that the saturation condition is satisfied, we introduce UEs randomly and uniformly over the simulation area until all idle BSs are active (i.e., each BS has a scheduled UE for which $P_u$ is sufficient to invert its path-loss so that it can communicate in the uplink). All UEs employ the channel inversion power control. The simulation is repeated 10000 times. 

 Fig.~\ref{valid} validates our analysis and shows that the derived model accurately captures the ${\rm SINR}$ outage. This figure manifests that a PPP with the proper intensity and exclusion areas can approximate the interference generated by different PPs with correlations among the transmitters. The figure also shows the effect of power control cutoff threshold $\rho_o$ on the ${\rm SINR}$ outage probability. The ${\rm SINR}$ outage decreases with increasing $\rho_o$ for two reasons. First, the intensity of scheduled D2D UEs decreases with increasing $\rho_o$ due to the D2D truncation probability. Hence, the interference coming from the D2D UEs  decreases, which decreases the ${\rm SINR}$ outage for all scheduled UEs (i.e., D2D and cellular UEs). Second, a higher $\rho_o$ implies a stronger desired signal power with respect to the interference power, and hence, a ${\rm SINR}$ outage probability. Note that increasing the cutoff threshold also increases average transmitted power by the UEs due to the channel inversion power control. However, the contribution of increased $\rho_o$ towards the useful signal power dominates its contribution towards the aggregate interference power, and therefore, it improves the overall ${\rm SINR}$ outage probability.

\begin{figure}[th]
	\begin{center}
 \includegraphics[width=2.6 in]{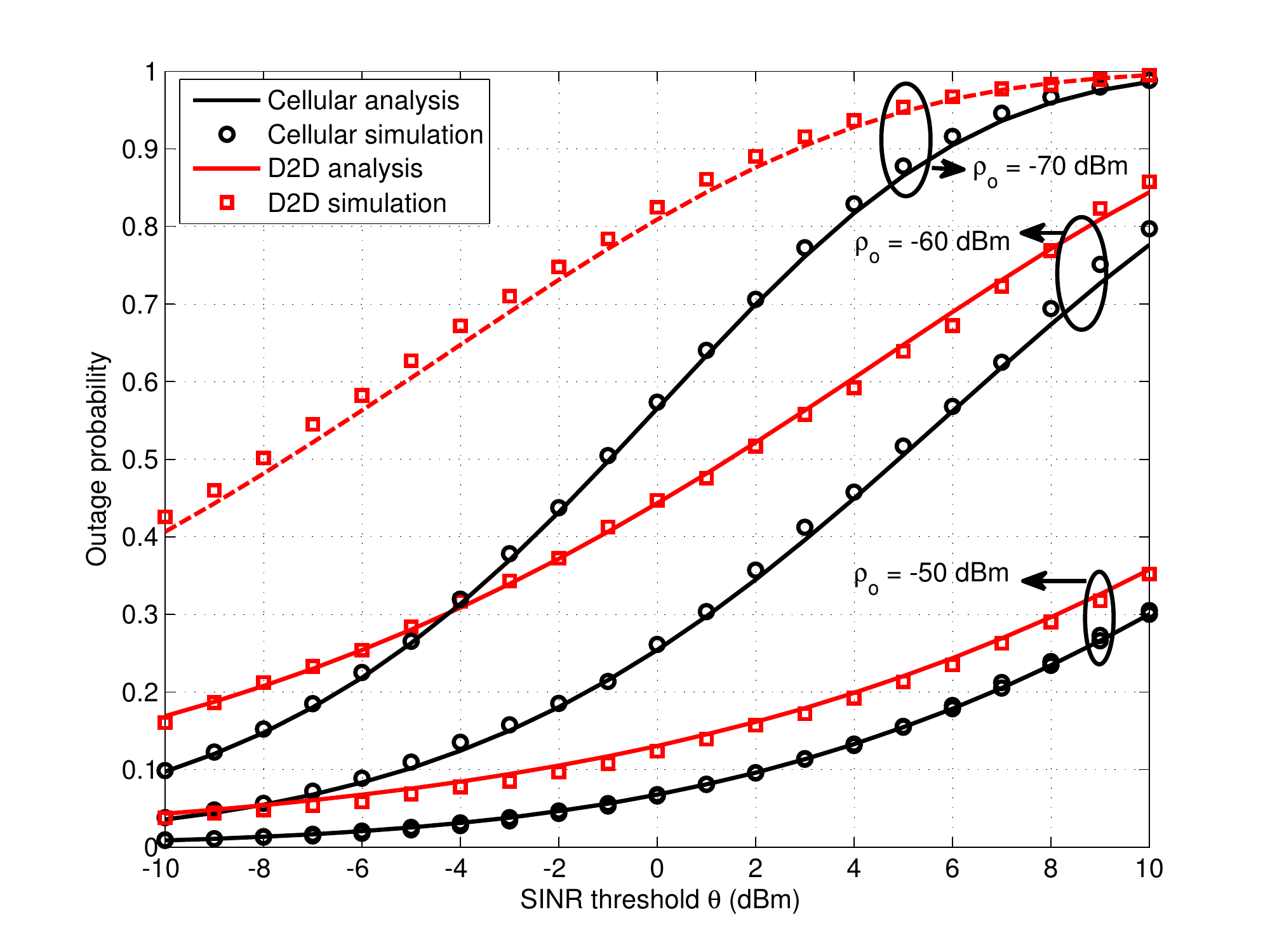}
	\end{center}
	\caption{ Model validation.}
	\label{valid}
\end{figure}

\subsection{Proposed Scheme vs. Distance-Based Mode Selection Scheme}

\begin{figure}[th]
	\begin{center}
		\subfigure[]{\label{co1}\includegraphics[width=2.6 in]{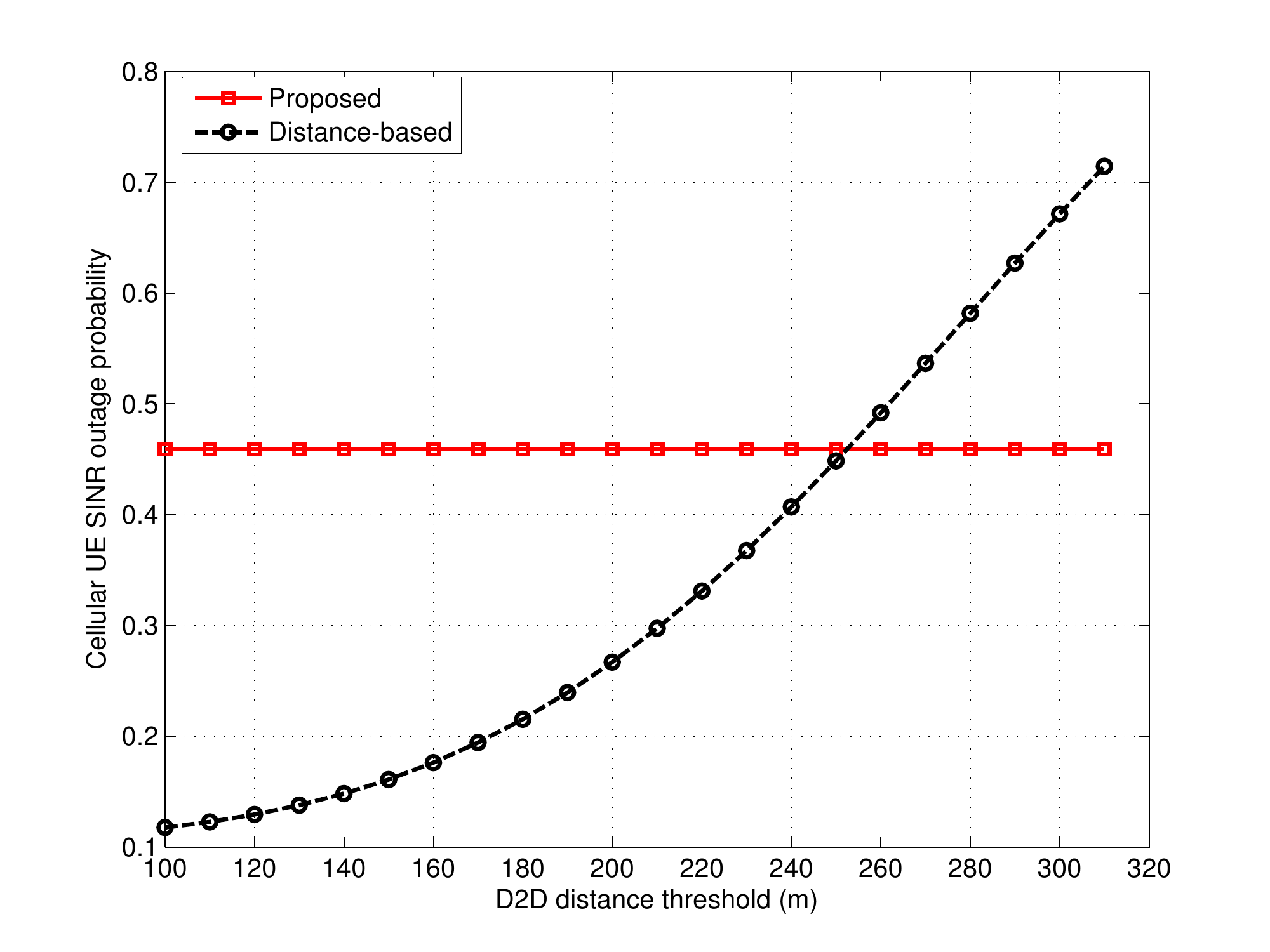}}
	  \subfigure[]{\label{co2}\includegraphics[width=2.6 in]{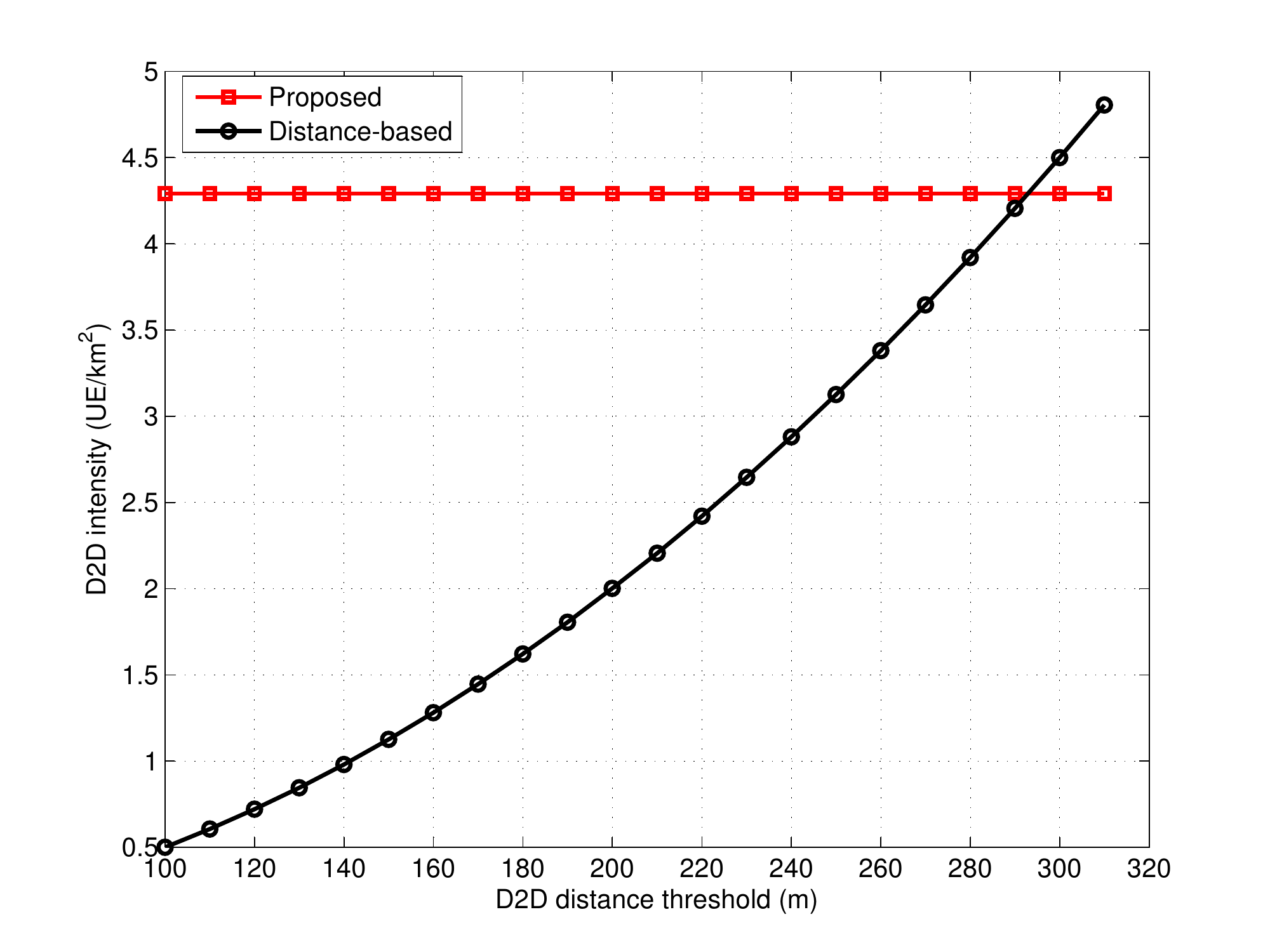}}
		\subfigure[]{\label{co3}\includegraphics[width=2.6 in]{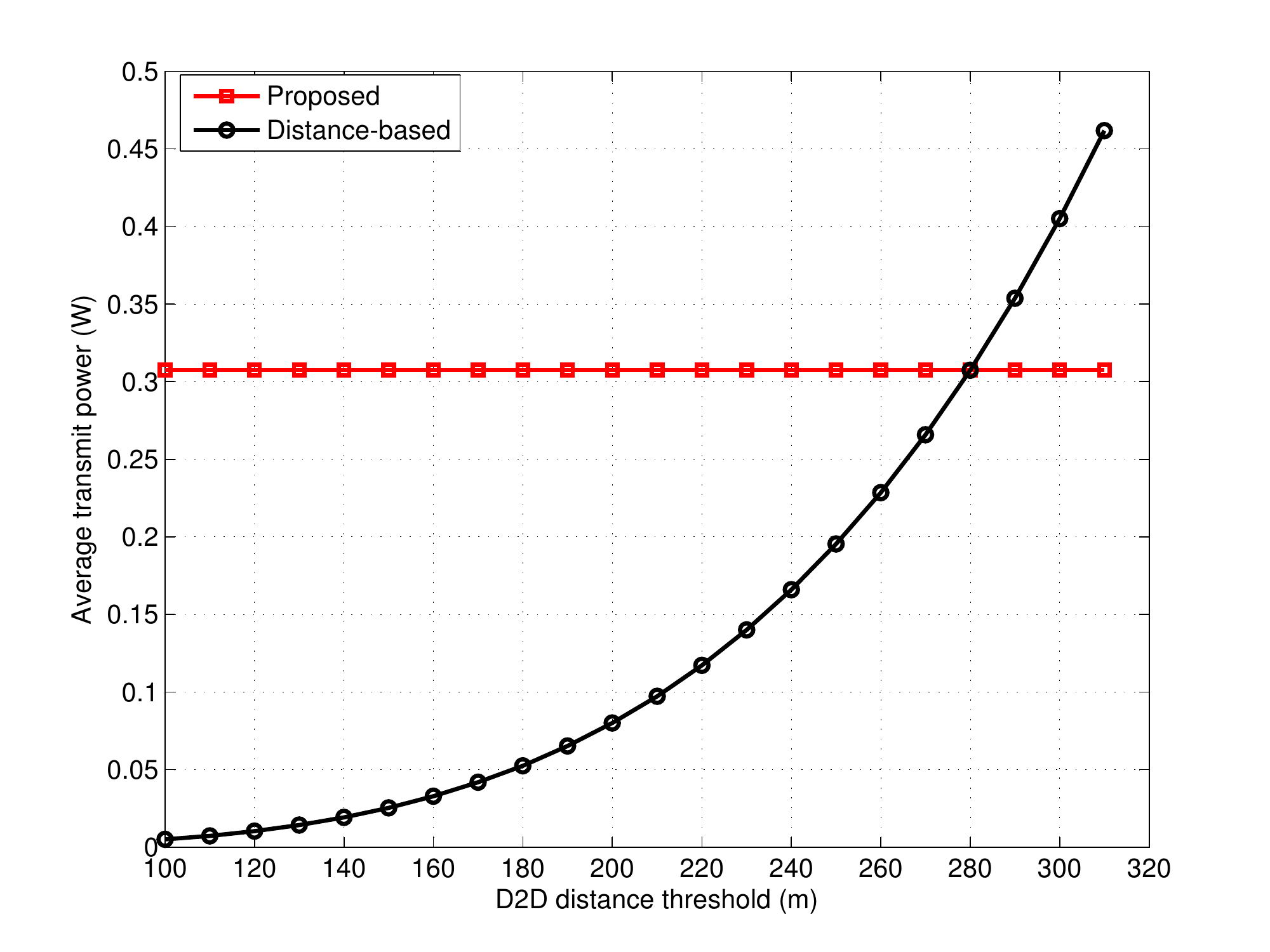}}
	\end{center}
	\caption{ Comparison between the proposed mode selection criterion and the distance-based mode selection criterion (for $T_d=1$, $\lambda=1$ BS/km$^2$, and $\rho_o=-70$ dBm) in terms of a) ${\rm SINR}$ outage probability, b) admitted D2D link intensity, and c) average transmit power.}
	\label{mvj}
\end{figure}

To compare the proposed mode selection scheme (which is based on the bias factor $T_d$) to the traditional mode selection scheme (which is based only on the D2D link distance), we plot Fig.~\ref{mvj}. The comparison is in terms of the ${\rm SINR}$ outage probability for cellular users (see Fig.~\ref{co1}), the intensity of admitted D2D links (see Fig.~\ref{co2}), and the average transmit power of D2D UEs (see Fig.~\ref{co3}). In order to conduct a fair comparison, we look at the point where both of the mode selection schemes result in the same intensity of admitted D2D UEs (distance threshold = 290 in Fig. \ref{co2}). Note that the different values of the admitted D2D UEs achieved by the distance-based mode selection scheme can be achieved via the proposed scheme by manipulating the bias factor $T_d$. Fig.~\ref{mvj}  shows that the proposed mode selection scheme outperforms the distance-based mode selection scheme. That is, for the same intensity of scheduled D2D links (see Fig.~\ref{co2}), the proposed mode selection scheme offers a lower ${\rm SINR}$ outage probability for cellular users (see Fig.~\ref{co1}) and lower average  transmit power of D2D UEs (see Fig.~\ref{co3}). 

The superiority of the proposed mode selection scheme is due to the following reasons. First, the proposed mode selection scheme correlates the D2D link distance with the cellular link distance, and hence, imposes interference protection around the cellular BSs. That is, for a potential UE, the D2D transmission mode is selected if and only if the power received from that UE at the nearest BS is less than $T_d \rho_o$. On the other hand, the 
distance-based mode selection scheme neglects the cellular link distance and schedules the D2D links according to a predefined D2D link distance threshold. That is, regardless of the cellular link distance, the D2D transmitter chooses the D2D mode if the D2D receiver is located at a distance less than a predefined threshold. Hence, there might be a D2D transmitter which is very close the a BS and selects the D2D mode, and thus, creates high cross-mode interference. For $T_d=1$, the proposed mode selection scheme  enforces all potential D2D UEs, which have a better cellular link quality, to communicate in the cellular mode. Hence, only the D2D UEs with relatively good D2D link quality communicate in the D2D mode which decreases the average transmit power of D2D UEs.

\subsection{Design Parameters}

In the following results, we study the effect two main design parameters on the performance of underlay D2D-enabled cellular networks, namely, the bias factor $T_d$ and the power control cutoff threshold $\rho_o$. $T_d$ is a crucial design parameter that controls the extent to which the D2D mode communication is enabled in the cellular network. As has been mentioned before, $T_d = 0$ completely disables the D2D communication and $T_d = \infty$ enforces all potential D2D UEs to communicate in D2D mode. 

We study the effect of $T_d$ in Figs. \ref{td1}-\ref{td4}. Fig. \ref{td1} shows that increasing $T_d$ increases the ${\rm SINR}$ outage probability. The reasons behind the degradation in ${\rm SINR}$ outage probability  with increasing  $T_d$ are as follows. First,  increasing $T_d$ increases the intensity of UEs selecting the D2D mode, and hence, increases the intensity of the interfering transmitters. Second, increasing $T_d$ decreases the interference protection around the cellular BSs, and hence, the  $\rm SINR$ outage probability of cellular users increases. Finally, for $T_d > 1$, the UEs spend more power to invert the channel towards the D2D receiver when compared to the power required to invert the channel towards the nearest BS in the cellular mode. Hence, $T_d > 1$ enforces a high transmit power for the D2D UEs, which deteriorates the overall ${\rm SINR}$ performance. Note that the cellular UEs experience a lower ${\rm SINR}$ outage probability (Fig. \ref{td1}). This is due to the interference protection around the BSs induced by the proposed mode selection scheme along with the channel inversion power control (see \textbf{Fact $\# 1$} in \textbf{Appendix~\ref{pth2}}), which is not the case for the D2D receivers.

\begin{figure}[th]
	\begin{center}
 \includegraphics[width=2.6 in]{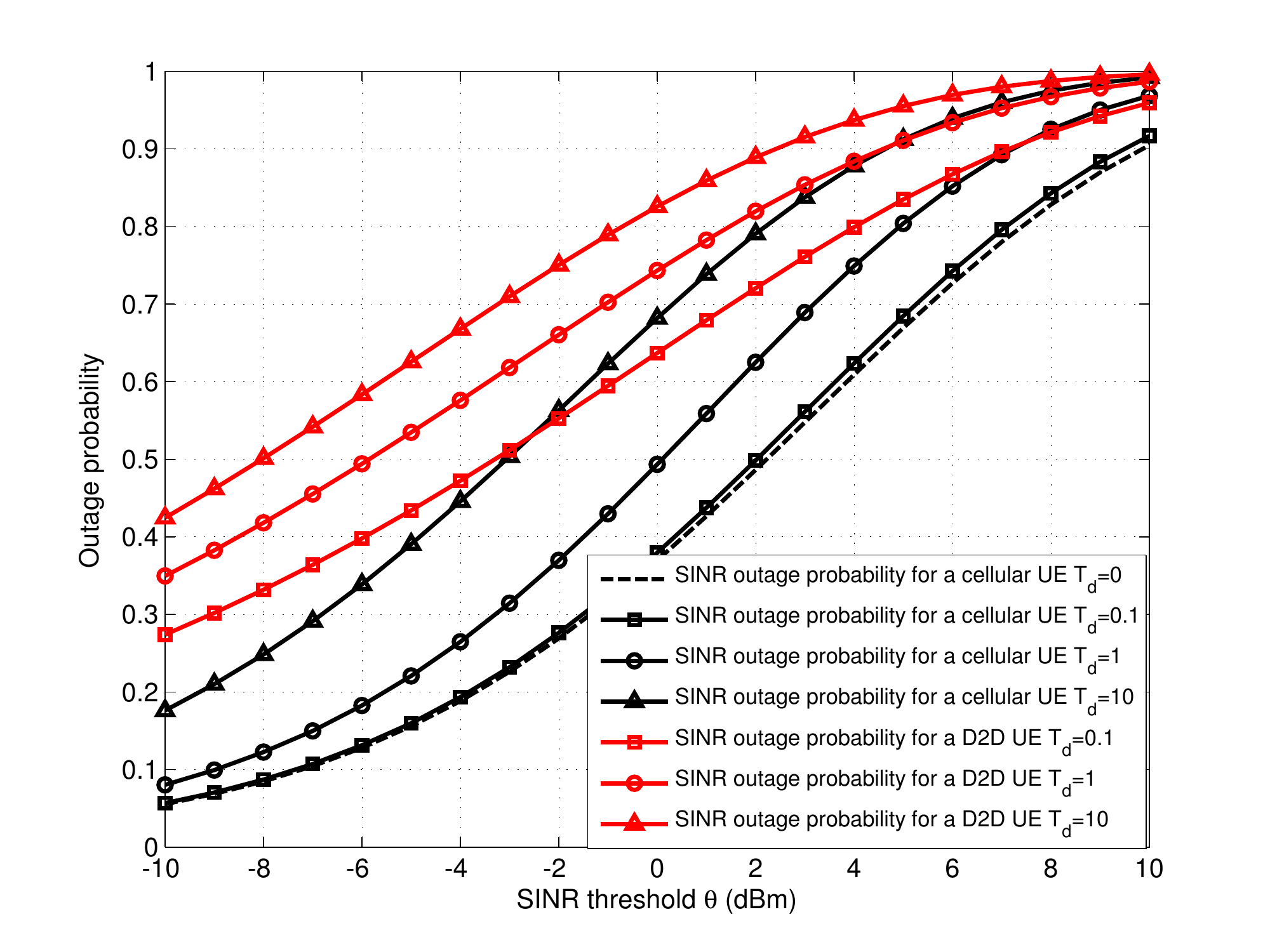}
	\end{center}
	\caption{ Effect of $T_d$ on ${\rm SINR}$ outage.}
	\label{td1}
\end{figure}

While Fig. \ref{td1} shows a negative impact of enabling D2D communication in cellular networks, Figs.~\ref{td2}-\ref{td4} show the positive impacts. For instance, Fig.~\ref{td2} shows that there exists an optimal $T_d$ that maximizes the rate ($\mathcal{R}$) for a generic potential D2D,  which is calculated as  
\begin{equation} 
\mathcal{R}=  \mathcal{P}_d \mathcal{R}_d + \frac{1}{2}\frac{(1-\mathcal{P}_d) \lambda}{(1-\mathcal{P}_d) p \mathcal{D}+(\mathcal{U}-p \mathcal{D}) (1-\mathcal{O}_p)} \mathcal{R}_c.
\label{minrate1}
\end{equation} 

At low $T_d$, most of the D2D UEs operate in the  cellular mode (i.e., $\mathcal{P}_d$ is very small) where the channel is shared by $\frac{\lambda}{(1-\mathcal{P}_d) p \mathcal{D}+(\mathcal{U}-p \mathcal{D}) (1-\mathcal{O}_p)}$ users on average. Note that we divide the cellular rate by 2 in equation (\ref{minrate1}) because the data requires at least two hops to reach its destination. Increasing $T_d$ increases $\mathcal{P}_d$ and more potential D2D UEs select the D2D mode where each UE exclusively uses the channel in the time domain\footnote{The channel is reused in the spatial domain by cellular and other D2D UEs.}, and hence, the average transmission rate  of a generic potential D2D UE increases. However, increasing $T_d$ beyond its optimal value degrades the average rate, despite the fact that each D2D UE exclusively uses the channel in the time domain, due to the high interference from the D2D network. This highly deteriorates the ${\rm SINR}$ performance.

\begin{figure}[th]
	\begin{center}
 \includegraphics[width=2.6 in]{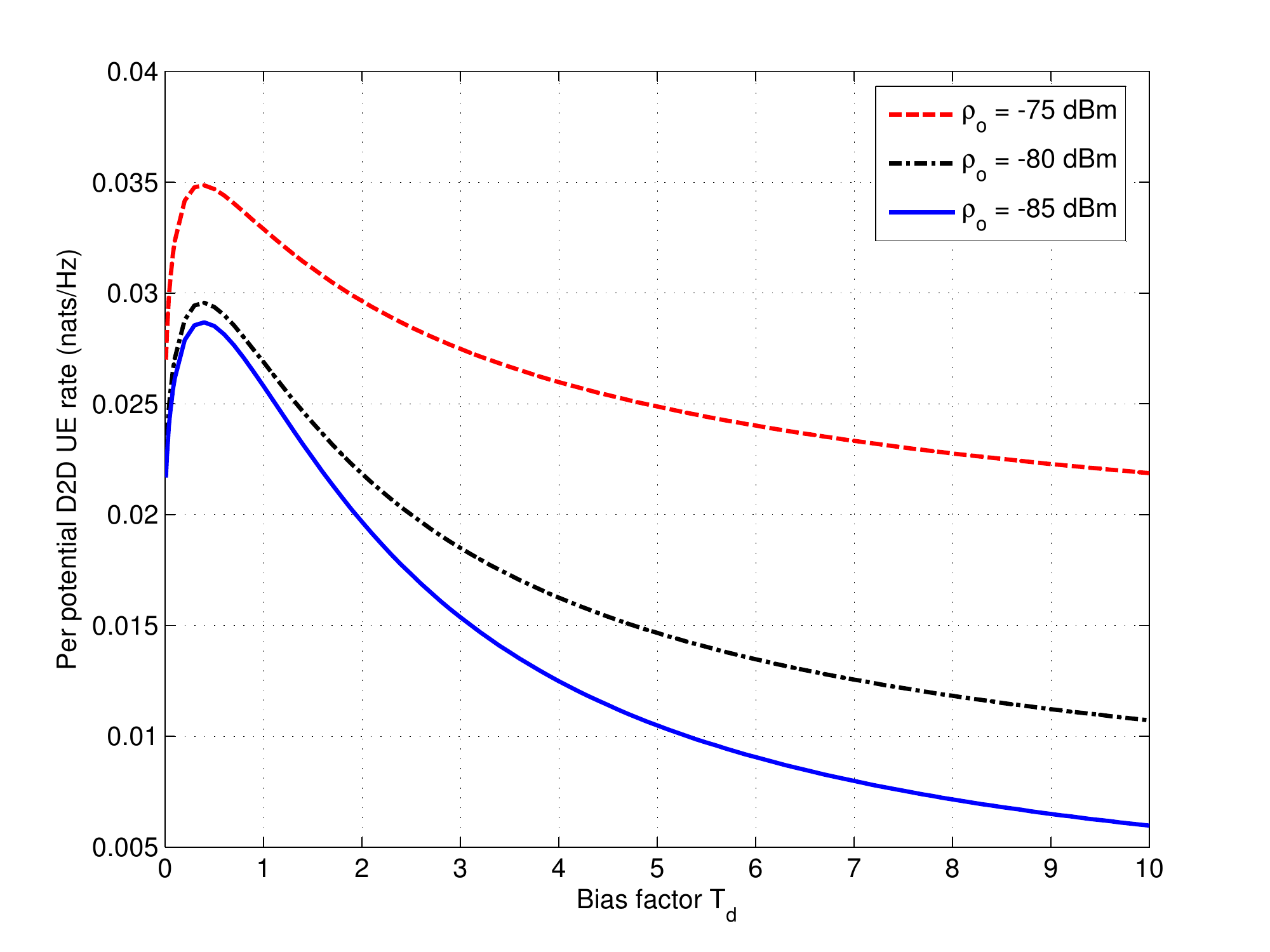}
	\end{center}
	\caption{ Effect of $T_d$ on the expected rate of a generic potential D2D UE.}
	\label{td2}
\end{figure}

Fig. \ref{td3} shows another figure of merit for underlay D2D communication in cellular networks. The figure  shows that with proper adjustment of bias threshold $T_d$, the total network capacity, which is calculated as $ \mathcal{T}= \mathcal{P}_d p\mathcal{D} \mathcal{R}_d + \lambda \mathcal{R}_c$, can be maximized. In particular,  D2D communication improves the spatial frequency reuse efficiency, and hence, increases the total network capacity. However, for high values (i.e., higher than its optimal value) of $T_d$, the total network capacity deteriorates as a result of the poor ${\rm SINR}$ performance due to the increased intensity of interfering D2D UEs and the decreased interference protection region around cellular BSs. 

\begin{figure}[th]
	\begin{center}
 \includegraphics[width=2.6 in]{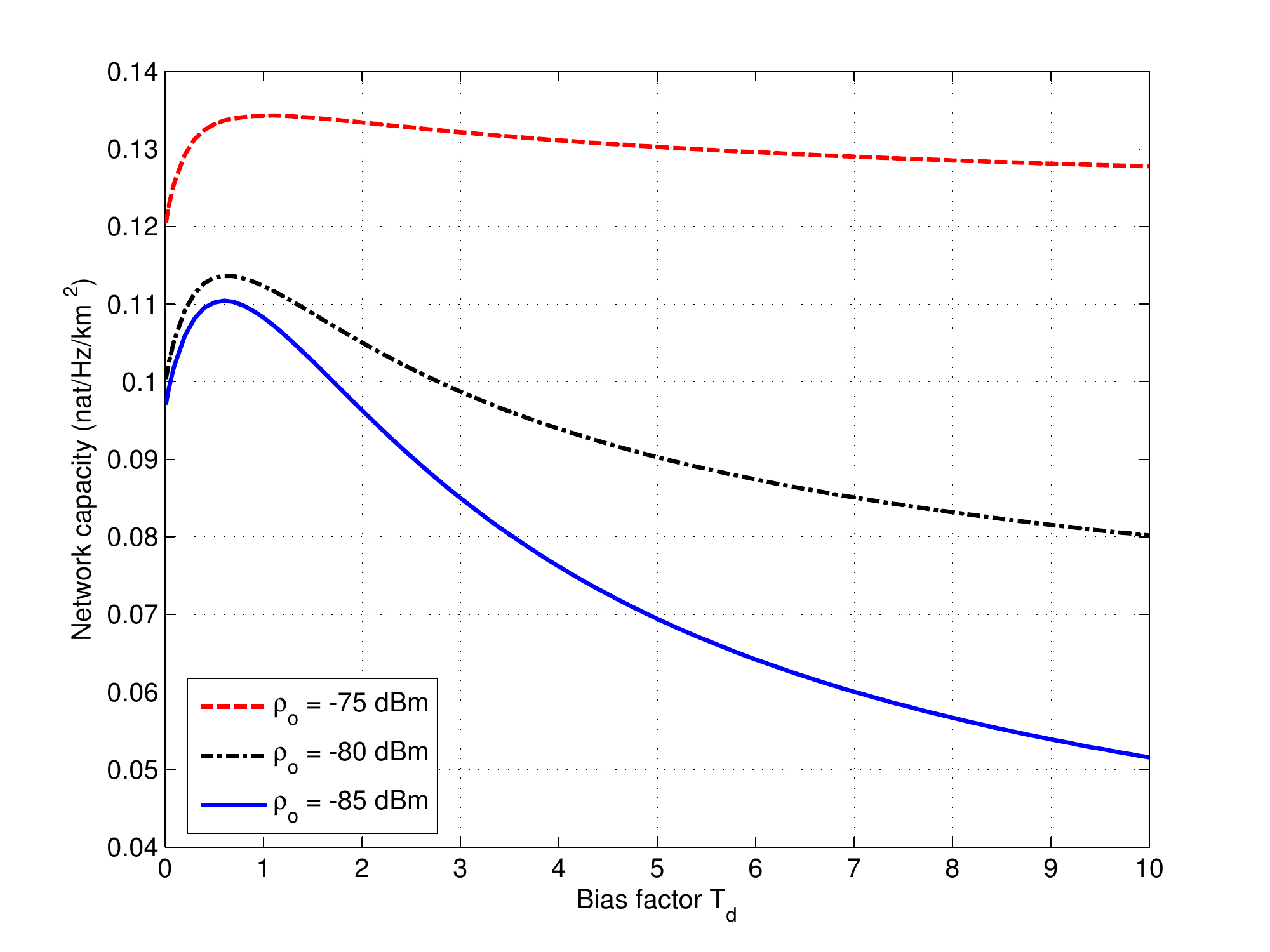}
	\end{center}
	\caption{ Effect of $T_d$ on the total network capacity $\mathcal{T}$.}
	\label{td3}
\end{figure}

Fig. \ref{td3} shows that enabling D2D communication can also be exploited to decrease the transmit powers of the potential D2D UEs. With $T_d = 1$, a potential D2D UE chooses the mode which costs less transmit power for channel inversion, and hence, $T_d=1$ is the optimal biasing factor that minimizes the transmit powers of the UEs. Note that the reduction in transmit power is more prominent for higher values of the cutoff threshold $\rho_o$ because of the higher transmit power required for channel inversion. 

\begin{figure}[th]
	\begin{center}
 \includegraphics[width=2.6 in]{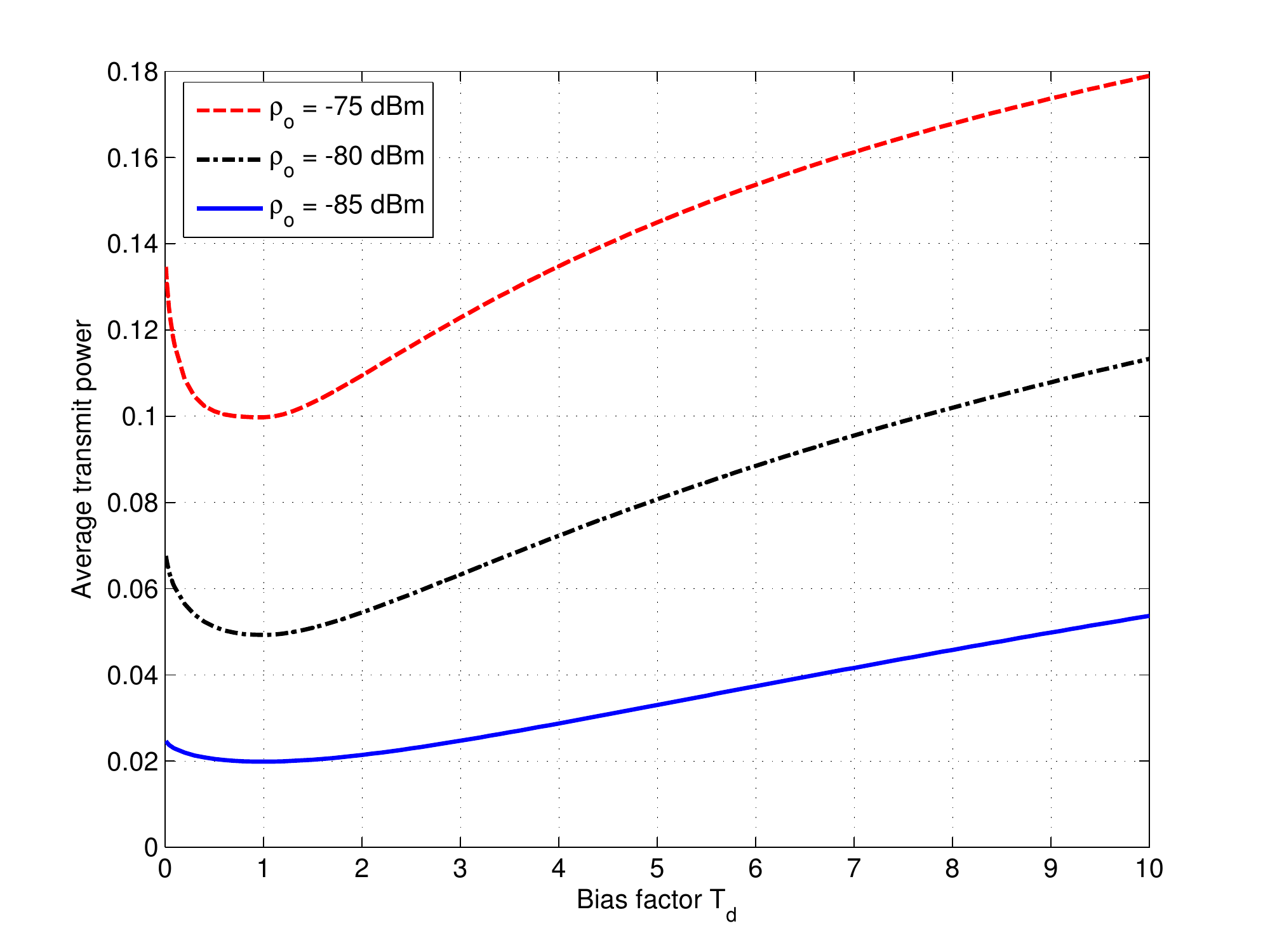}
	\end{center}
	\caption{Effect of $T_d$ on expected transmit power of a potential D2D UE.}
	\label{td4}
\end{figure}

Figs. \ref{valid}-\ref{td4} also reveal that the power control cutoff threshold is a very crucial design parameter that affects all of the performance metrics. On one hand, Figs. \ref{valid}-\ref{td3} show that increasing $\rho_o$ improves the network performance in terms of ${\rm SINR}$ outage as well as expected link capacity. On the other hand, Fig.~\ref{td4} shows that increasing $\rho_o$ increases the transmit power consumption of the UEs by increasing the average transmit power. In order to see the effect of cutoff threshold more clearly, we plot Fig.~\ref{rho}. 

\begin{figure}[th]
	\begin{center}
		\subfigure[]{\label{rho_d}\includegraphics[width=2.6 in]{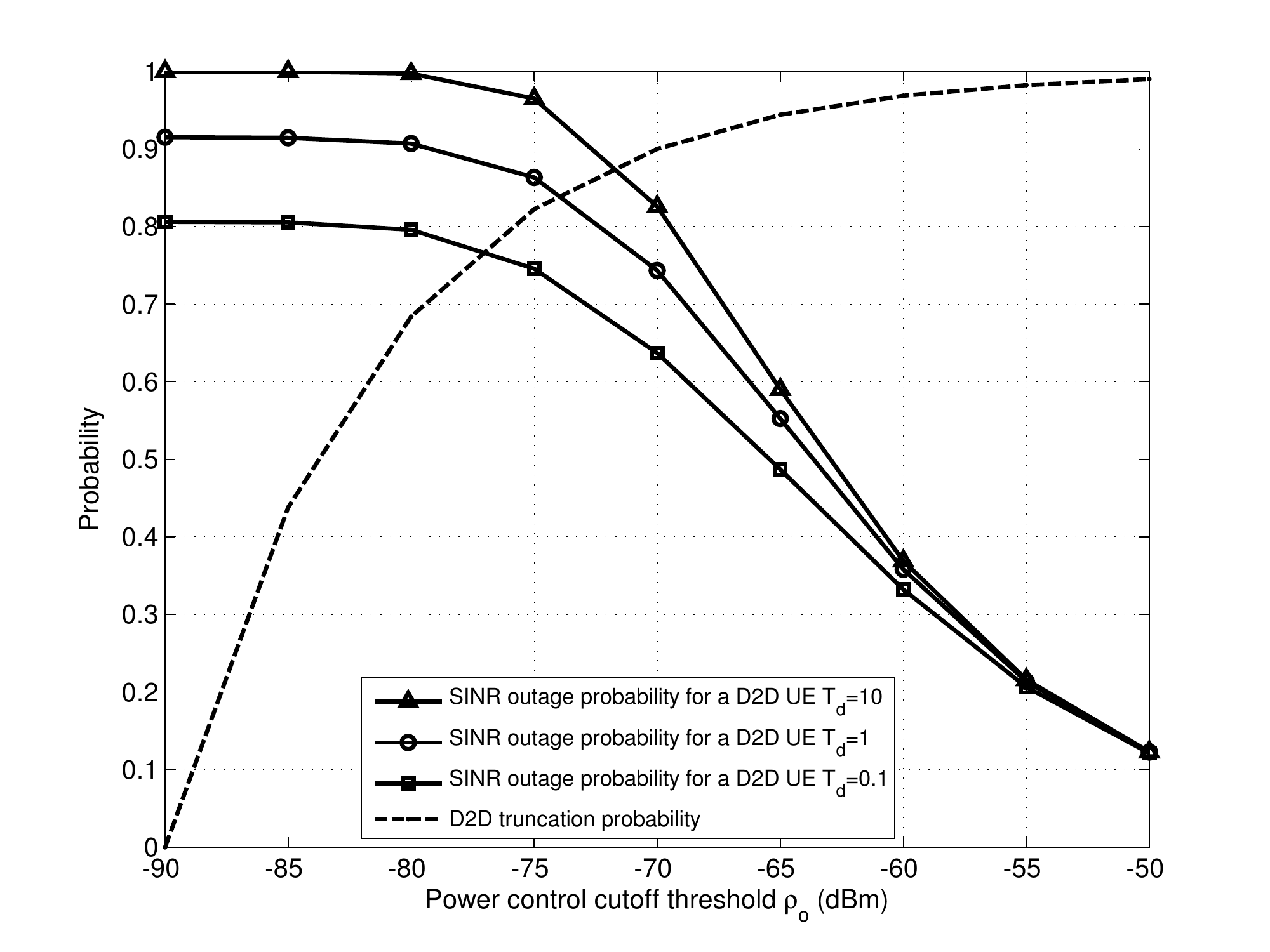}}
	  \subfigure[]{\label{rho_c}\includegraphics[width=2.6 in]{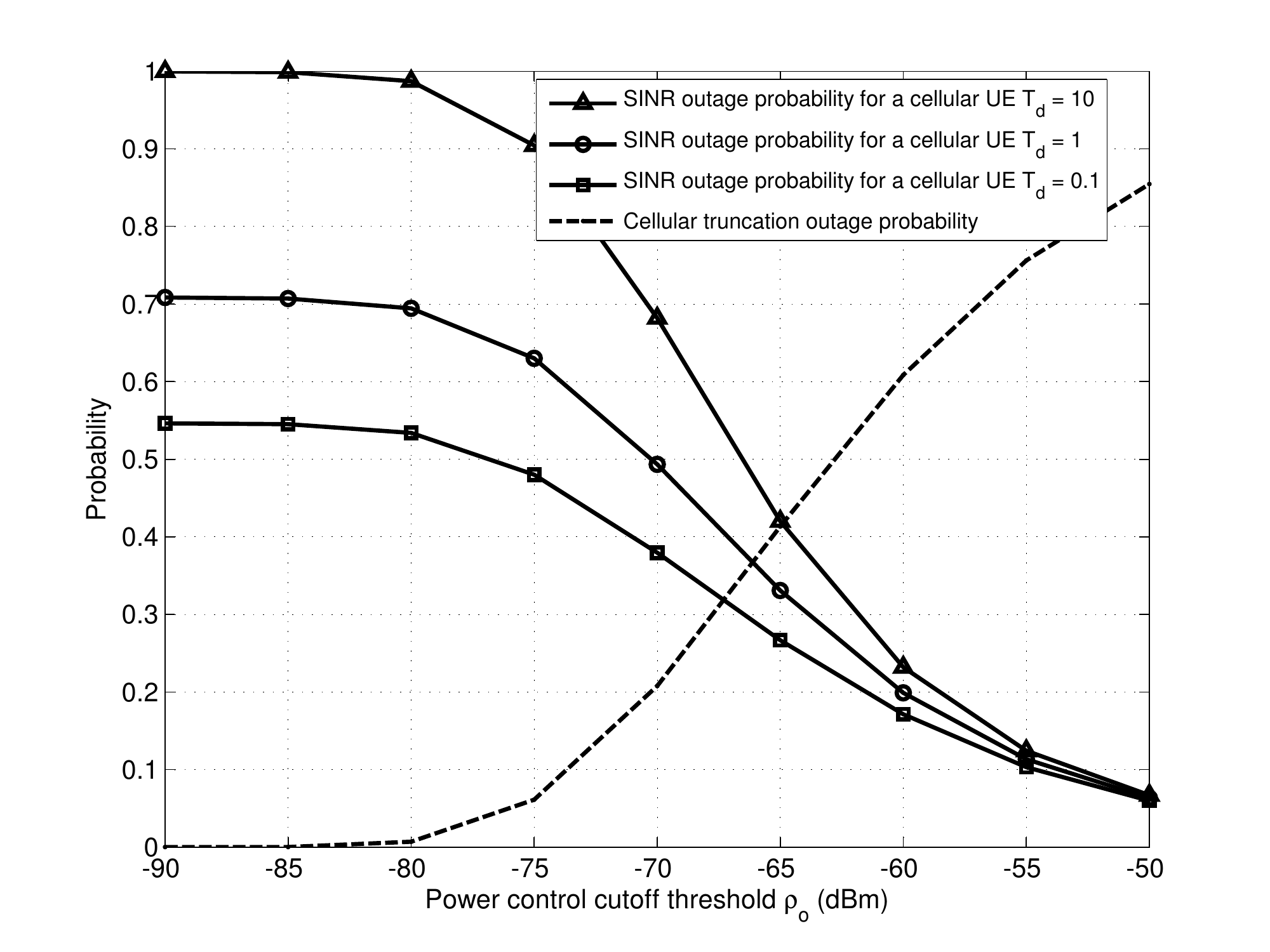}}
	\end{center}
	\caption{ Effect of power control cutoff threshold $\rho_o$ on  ${\rm SINR}$ outage and truncation outage at $T_d =1$ for a) D2D links and b) cellular links.}
	\label{rho}
\end{figure}

Both Figs.~\ref{rho_d} and \ref{rho_c} show that increasing the cutoff threshold decreases the ${\rm SINR}$ outage probability for both the D2D links and cellular links. However, increasing $\rho_o$ requires a higher transmit power to invert the channel between transmitters and receiver pairs which increases the truncation outage. That is, at low values of $\rho_o$, the ${\rm SINR}$ dominates the outage probability due to the low power of the useful signal. On the other hand, increasing the cutoff threshold $\rho_o$ increases the power of the useful signal and decreases the ${\rm SINR}$ outage probability at the expense of increased truncation outage probability.  As has been mentioned earlier, increasing $\rho_o$ increases the average transmit power per UE in both  cellular and D2D modes (as shown in Fig.~\ref{td4}), which increases the aggregate interference. However, the contribution of  $\rho_o$  to the useful signal power dominates the increased  aggregate interference power and results in an improved ${\rm SINR}$ performance. 

It is worth mentioning that the behavior of truncation outage differs in the D2D case from that in the cellular case due to the different distance distributions between the transmitter receiver pairs. In particular, increasing the cutoff threshold significantly limits the D2D communication due to the assumed uniform distribution of the D2D link distance. Note that for D2D applications in the context of social networks, where shorter D2D link distances are more likely to occur~\cite{dist1, dist2} (which is different from the assumed distribution for the D2D link distance), the effect of the cutoff threshold on the D2D truncation will be less prominent.

\subsection{Discussions} \label{dis}

The proposed analytical framework is general and captures the distance-based mode selection scheme as a special case. More specifically, by setting $T_d = \infty$ and manipulating the D2D link distance via the truncation outage  (i.e., by varying $\rho_o$), our model reduces to the distance-based mode selection scheme. The results show that setting a high value of $T_d$ for any value of $\rho_o$ results in a high degradation in $\rm SINR$ performance. This implies that considering the D2D link distance only as the mode selection criterion will not provide an efficient solution to the mode selection problem.
 On the other hand, the proposed mode selection scheme accounts for both the cellular link distance and D2D link distance and introduces a fine-tuned control for mode selection via the bias factor $T_d$. Based on the operator's objective, a suitable value of $T_d$ can be selected using the presented framework. Compared to the mode selection scheme based on the D2D link distance only, the proposed scheme results in a $25\%$ reduction in the ${\rm SINR}$ outage probability for cellular UEs and a $15 \%$ reduction in the average transmit power of the D2D UEs for the same intensity of scheduled D2D UEs.

The results show that the power control cutoff threshold $\rho_o$ introduces an important tradeoff for the network performance. On one hand, increasing $\rho_o$ improves the ${\rm SINR}$ performance by decreasing the ${\rm SINR}$ outage probability and increasing the link capacity. However, increasing $\rho_o$ requires a higher transmit power to satisfy the power control cutoff threshold and increases the truncation outage probability. One solution to this problem is to set $\rho_o$ to a relatively low value to ensure an acceptable truncation outage probability while implementing an interference management technique (e.g., cognition) to improve the ${\rm SINR}$ performance.

The analysis and results shown is this paper present a pessimistic evaluation for the D2D communication. That is, in addition to the PPP assumption for the cellular BSs which gives pessimistic bounds on the performance metrics \cite{trac, trac2, pp-cellular},  the one channel assumption along with the aggressive spectrum access for the UEs operating in the D2D mode introduce massive cross-mode and intra-mode interference. Furthermore, assuming that the D2D receiver is uniformly distributed around the D2D transmitter implies that larger D2D link distances are more likely to occur, which increases the transmit power of the D2D devices and the associated interference. Nevertheless, under these pessimistic assumptions, underlay D2D communication has shown performance improvement in terms of spatial frequency reuse, link capacity, and total network capacity, for the cellular networks. Furthermore, D2D communication can be exploited to decrease the cellular truncation outage probability which results due to the maximum limited power control. Therefore, we expect significant performance gains for the D2D underlay network with interference coordination (e.g., via cognition)\footnote{It was shown in \cite{our-globe, our-tmc} that interference coordination via cognition highly decreases the outage probability in multi-tier cellular networks.} and assuming different D2D link distance distribution (e.g., in the context of social networking). Note that the analysis in this paper can be extended to capture interference coordination via cognition among the D2D UEs using the same methodology as presented in \cite{our-tmc}. 

The results show that enforcing all potential D2D UEs to communicate in the D2D mode results in a significant degradation in   network performance. This is because, wih  $T_d > 1$, the interference signal power  received at a BS from an individual D2D transmitter is greater than the intended uplink signal power  $\rho_0$ (i.e., the interference power is $T_d \rho_0$). Hence, the performance gain in terms of the spatial frequency reuse efficiency is offset by the performance degradation in terms of the ${\rm SINR}$ performance, and consequently, the overall network performance deteriorates.

Finally, we emphasize that although the framework is developed for single-tier cellular networks, extension of the framework to multi-tier networks is systematic and straightforward. In fact, as shown in \cite{our-uplink}, if all network tiers have the same cutoff threshold and path-loss exponent, the multi-tier network analysis reduces to the single-tier network case and the developed framework captures this special case. The model can be extended to more general cases with different cutoff thresholds and path-loss exponents, for the different network tiers, via the same methodology as in \cite{our-uplink}.

\section{Conclusion} \label{con}

 We have proposed a biasing-based mode selection method for D2D-enabled cellular networks. The extent to which D2D communication is enabled and the amount of traffic offloaded to the D2D communication mode is controlled via the bias value $T_d$. We have developed an analytical paradigm to evaluate outage and rate in the proposed D2D-enabled cellular network. The results have shown that underlay D2D communication is capable of improving the system performance in terms of spatial frequency reuse, link capacity, and total network capacity. Significant performance gains in the total network performance can be expected from underlay D2D communication with the deployment of  interference management methods.  The analysis has revealed that enforcing all potential D2D UEs to communicate in the D2D mode, or selecting the D2D mode according to the D2D link distance only results in a degraded network performance when compared  to the proposed mode selection scheme.

The developed analytical framework has revealed two important design parameters for D2D-enabled cellular networks, namely, the bias factor and the power control cutoff threshold. In particular, the bias factor controls the interference protection for cellular users and the intensity of enabled D2D communication. The results have shown that there exists an optimal D2D bias factor that depends on the network objective. The power control cutoff threshold controls the tradeoff between the ${\rm SINR}$ outage and truncation outage. An appropriate value of the cutoff threshold that balances between the two outage probabilities can be obtained by using the developed framework. The presented framework can be extended to include interference management techniques in order to improve the performance of D2D-enabled cellular network.




\appendices

\section{Proof of Lemma~\ref{lem1}} \label{proof1}

Since the intended receiver is uniformly distributed in the communication range (proximity) of the potential D2D transmitter, the {\em pdf} of the D2D link distance ($r_d$) is given by $f_{r_d}(r) = \frac{2r}{R^2}$ $0 \leq r \leq R$. According to the PPP assumption of the locations of the BSs, the cellular link distance $r_c$ is Rayleigh distributed with {\em pdf} $f_{r_c}(r)=2 \pi \lambda r e^{-\lambda \pi r^2}$  $0 \leq r$ \cite{trac}. The probability of selecting the D2D mode can be expressed as

\small
\begin{align}
\mathbb{P}&\left\{T_d r_d^{-\eta_d} > r_c^{-\eta_c} \right\} = \mathbb{P}\left\{r_c > \frac{r_d^{\frac{\eta_d}{\eta_c}}}{T_d^{\frac{1}{\eta_c}}} \right\} \notag \\
&= \int_{0}^{R} \frac{2r}{R^2} e^{-\pi \lambda \frac{r^{\frac{2\eta_d}{\eta_c}}}{T_d^{\frac{2}{\eta_c}}}} dr \notag \\
&\stackrel{(i)}{=}  \frac{ \eta_c T_d^{\frac{2}{\eta_d}}}{ \eta_d R^2} \left(\frac{1}{\pi \lambda}\right)^{\frac{\eta_c}{\eta_d}} \gamma\left(\frac{\eta_c}{\eta_d},\pi \lambda \left(\frac{R^{\eta_d}}{T_d}\right)^{\frac{2}{\eta_c}} \right) \notag
\end{align}
\normalsize
where $(i)$ follows from changing variables $\left(x=\pi \lambda \frac{r^{\frac{2\eta_d}{\eta_c}}}{T_d^{\frac{2}{\eta_c}}}\right)$ and integration.

\section{Proof of Lemma~\ref{lem2}} \label{proof2}

Let $X_c = \rho_o r_c^{\eta_c}$ and $X_d = \rho_o r_c^{\eta_d}$ denote the unconditional transmit powers required to invert the channel towards the nearest BS and D2D receiver, respectively. Then, from the {\em pdf} of $r_c$ and $r_d$ given in \textbf{Appendix \ref{proof1}}, the {\em pdf} of $X_c$ can be obtained as $f_{X_c}(x) = \frac{2 \pi \lambda x^{\frac{2}{\eta_c}-1} e^{-\pi \lambda \left(\frac{x}{\rho_o}\right)^{\frac{2}{\eta_c}}}}{\eta_c  \rho_o^{\frac{2}{\eta_c}}}$, $0 \leq x \leq \infty$,  and  the {\em pdf} of $X_d$ can be obtained as  $f_{X_d}(x) = \frac{2 x^{\frac{2}{\eta_d}-1}}{\eta_d (\rho_oR)^\frac{2}{\eta_d}} $, $0 \leq x \leq \rho_o R^{\eta_d}$. According to the mode selection scheme we have $P_d = \left\{X_d : X_d \leq T_d X_c \right\} $ and due to the maximum transmit power constraint, we have $P_d \leq P_u$. Then, given that a user is operating in the D2D mode,  the {\em pdf} of its transmit power  is given by

\small
\begin{align}
f_{P_d}(x) &=  \int_x^{\infty} \frac{f_{X_d|T_dX_c}(x|y) \mathbb{P}\left\{X_d \leq y\right\} f_{T_d X_c}(y)}{\mathbb{P}\left\{X_d < T_dX_c\right\}} dy \notag \\
 &= \frac{1}{\mathbb{P}\left\{X_d \leq T_d X_c \right\}} \int_{\frac{x}{T_d}}^{\infty} \frac{2 x^{\frac{2}{\eta_d}-1}}{\eta_d \min\left( y T_d,P_u\right)^\frac{2}{\eta_d}} \notag \\
&\frac{\min\left(T_d y,P_u\right)^{\frac{2}{\eta_d}}}{P_u^\frac{2}{\eta_d}} \frac{2 \pi \lambda y^{\frac{2}{\eta_c}-1} e^{-\pi \lambda \left(\frac{y}{\rho_o}\right)^{\frac{2}{\eta_c}}}}{\eta_c  \rho_o^{\frac{2}{\eta_c}}} dy \notag \\
 &=\frac{2 x^{\frac{2}{\eta_d}-1}}{\eta_d P_u^\frac{2}{\eta_d} \mathbb{P}\left\{X_d \leq T_d X_c \right\}}\left[  \int_{\frac{x}{T_d}}^{\frac{P_u}{T_d}}  \frac{2 \pi \lambda y^{\frac{2}{\eta_c}-1} }{ \eta_c  \rho_o^{\frac{2}{\eta_c}} } \right. \notag \\
& e^{-\pi \lambda \left(\frac{y}{\rho_o}\right)^{\frac{2}{\eta_c}}} dy + \left.\int_{\frac{P_u}{T_d}}^{\infty}  \frac{2 \pi \lambda y^{\frac{2}{\eta_c}-1} e^{-\pi \lambda \left(\frac{y}{\rho_o}\right)^{\frac{2}{\eta_c}}}}{\eta_c  \rho_o^{\frac{2}{\eta_c}} } dy\right] \notag \\
&= \frac{2 x^{\frac{2}{\eta_d}-1}  (\pi \lambda)^{\frac{\eta_c}{\eta_d}} e^{-\pi \lambda \left(\frac{x}{T_d \rho_o}\right)^\frac{2}{\eta_c}} }{\eta_c (T_d \rho_o)^{\frac{2}{\eta_d}} \gamma\left(\frac{\eta_c}{\eta_d},\pi \lambda \left(\frac{P_u}{T_d\rho_o}\right)^{\frac{2}{\eta_c}}\right)}, 0 \leq x \leq P_u.  
\end{align}
\normalsize

The $\alpha$th moment of $P_d$ is obtained as $\int_{0}^{P_u} x^\alpha f_{P_d}(x) dx$.


\section{Proof of Lemma~\ref{lem4}} \label{proof4}


Let $\tilde{X}_c = \rho_o \tilde{r}_c^{\eta_c}$, then the {\em pdf} of $\tilde{X}_c$ can be obtained as $f_{\tilde{X}_c}(x) = \frac{2 \pi \lambda x^{\frac{2}{\eta_c}-1} e^{-\pi \lambda \left(\frac{x}{\rho_o}\right)^{\frac{2}{\eta_c}}}}{\eta_c  \rho_o^{\frac{2}{\eta_c}}\left(1-e^{-\pi \lambda \left(\frac{P_u}{\rho_o}\right)^{\frac{2}{\eta_c}}}\right)}$, $0 \leq x \leq P_u$. According to the mode selection scheme we have $\tilde{P}_4 = \left\{\tilde{X}_c : \tilde{X}_c \leq \frac{X_d}{T_d}\right\}$, and the {\em pdf } of $\tilde{P}_4$ can be obtained as 

\small
\begin{align}
f_{\tilde{P}_4}(x)  &= \int_x^{\frac{P_u}{T_d}} \frac{f_{\tilde{X}_c|\frac{X_d}{T_d}}(x|y) \mathbb{P}\left\{\tilde{X}_c \leq y\right\} f_{\frac{X_d}{T_d}}(y) dy}{\mathbb{P}\left\{\tilde{X}_c < \frac{X_d}{T_d}\right\}} \notag \\ 
&= \int_x^{\frac{{P}_u}{T_d}} \frac{2 \pi \lambda x^{\frac{2}{\eta_c}-1} e^{-\pi \lambda \left(\frac{x}{\rho_o}\right)^{\frac{2}{\eta_c}}}}{\eta_c  \rho_o^{\frac{2}{\eta_c}} \left(1-e^{-\pi \lambda \left(\frac{P_u}{\rho_o}\right)^{\frac{2}{\eta_c}}}\right)} 
\frac{2 y^{\frac{2}{\eta_d}-1}}{\eta_d \left(\frac{P_u}{T_d}\right)^\frac{2}{\eta_d}}  dy \notag \\
&= \frac{2 \pi \lambda x^{\frac{2}{\eta_c}-1} \left({P}_u^\frac{2}{\eta_d} - (T_dx)^\frac{2}{\eta_d}\right) e^{-\pi \lambda \left(\frac{x}{\rho_o}\right)^{\frac{2}{\eta_c}}}}{\mathbb{P}\left\{\tilde{X}_c < \frac{X_d}{T_d}\right\} \eta_c  \rho_o^{\frac{2}{\eta_c}} P_u^\frac{2}{\eta_d} \left(1-e^{-\pi \lambda \left(\frac{P_u}{\rho_o}\right)^{\frac{2}{\eta_c}}}\right)}, \notag \\
& \quad \quad \quad \quad \quad \quad \quad \quad 0 \leq x \leq \frac{P_u}{\max(T_d,1)}.  
\end{align}
\normalsize
Note that both the random variables $\tilde{X}_c$ and $X_d$ have the finite support domain $[0,P_u]$; therefore, the value of $T_d$ will affect the probability $\mathbb{P}\left\{\tilde{X}_c < \frac{X_d}{T_d}\right\}$. For $T_d \leq 1$, we have

\small
\begin{align}
\mathbb{P}\left\{\tilde{X}_c < \frac{X_d}{T_d}\right\}&=  \int_0^{{T_d P_u}} \frac{1 - e^{-\pi \lambda \left(\frac{x}{T_d\rho_o}\right)^{\frac{2}{\eta_c}}}}{\left(1-e^{-\pi \lambda \left(\frac{P_u}{\rho_o}\right)^{\frac{2}{\eta_c}}}\right)} \frac{2 x^{\frac{2}{\eta_d}-1}}{\eta_d P_u^\frac{2}{\eta_d}} dx  \notag \\
&+ \mathbb{P}\left\{T_d P_u \leq X_d \leq P_u\right\}  \notag \\
&=1 - T_d^\frac{2}{\eta_d} +  \frac{T_d^\frac{2}{\eta_d}}{{\left(1-e^{-\pi \lambda \left(\frac{P_u}{\rho_o}\right)^{\frac{2}{\eta_c}}}\right)}} \notag \\
&-\frac{\eta_c(T_d \rho_o)^{\frac{2}{\eta_d}} \gamma\left(\frac{\eta_c}{\eta_d},\pi \lambda \left(\frac{P_u}{\rho_o}\right)^{\frac{2}{\eta_c}}\right)}{\eta_d P_u^\frac{2}{\eta_d } (\pi  \lambda)^{\frac{\eta_c}{\eta_d}} {\left(1-e^{-\pi \lambda \left(\frac{P_u}{\rho_o}\right)^{\frac{2}{\eta_c}}}\right)} }. 
\label{pp1}
\end{align}

For $T_d > 1$, we have

\small
\begin{align}
\mathbb{P}\left\{\tilde{X}_c < \frac{X_d}{T_d}\right\}&=  \int_0^{P_u} \frac{1 - e^{-\pi \lambda \left(\frac{x}{T_d\rho_o}\right)^{\frac{2}{\eta_c}}}}{\left(1-e^{-\pi \lambda \left(\frac{P_u}{\rho_o}\right)^{\frac{2}{\eta_c}}}\right)} \frac{2 x^{\frac{2}{\eta_d}-1}}{\eta_d P_u^\frac{2}{\eta_d}} dx  \notag \\
&=\frac{1 }{\left(1-e^{-\pi \lambda \left(\frac{P_u}{\rho_o}\right)^{\frac{2}{\eta_c}}}\right)} \notag \\ 
&-\frac{\eta_c(T_d \rho_o)^{\frac{2}{\eta_d}} \gamma\left(\frac{\eta_c}{\eta_d},\pi \lambda \left(\frac{P_u}{T_d\rho_o}\right)^{\frac{2}{\eta_c}}\right)}{\eta_d P_u^\frac{2}{\eta_d } (\pi  \lambda)^{\frac{\eta_c}{\eta_d}} {\left(1-e^{-\pi \lambda \left(\frac{P_u}{\rho_o}\right)^{\frac{2}{\eta_c}}}\right)} }.
\label{ppp2}
\end{align}
Then we combine (\ref{pp1}) and (\ref{ppp2}) in a compact form to arrive at (\ref{pp3}). Note that $\mathbb{P}\left\{\tilde{X}_c < \frac{X_d}{T_d}\right\} = \mathbb{P}\left\{\tilde{r}_c < \frac{r_d}{T_d}\right\}$. The moments of $\tilde{P}_4$ are evaluated by $\int_0^{P_u} x^\alpha f_{\tilde{P}_4}(x) dx $.

\section{Proof of Theorem~\ref{th1}} \label{pth1}

The theorem is proved by obtaining the LTs of $\mathcal{I}_{dd}$ and $\mathcal{I}_{cd}$ and then substituting back in (\ref{SINR1}).  An approximate LT of the aggregate interference on a D2D receiver located at the origin from other D2D transmitters is obtained based on the assumption that the interfering D2D UEs constitute a PPP rather than a Poisson hole process. The approximate LT is obtained as

\small
\begin{align} 
 &\mathcal{L}_{\mathcal{I}_{dd}}(s) = \mathbb{E}\left[e^{- \underset{u_i \in \tilde{\mathbf{\Phi}}_d}{\sum} \tilde{P}_{d_i} h_i \left\|u_i\right\|^{-\eta_d}}\right]\notag \\
 &\stackrel{(ii)}{=}  \mathbb{E}_{ \tilde{\mathbf{\Phi}}_d}\left[\underset{u_i \in \tilde{\mathbf{\Phi}}_d}{\prod} \mathbb{E}_{P_d,h}\left[ e^{-s \tilde{P}_{d_i} h_i \left\|u_i\right\|^{-\eta_d}}\right]\right]\notag \\
 &\stackrel{(iii)}{=}   \exp\left( -  \frac{2 \pi \tilde{\mathcal{U}}_d}{\left|\mathbf{S}\right|} \int_{0}^{\infty}\mathbb{E}_{P_d,h}\left[  \left(1- e^{-s \tilde{P}_{d} h x^{-\eta_d}}\right)\right] xdx \right)\notag \\
&=\exp\left( - \frac{\pi \tilde{\mathcal{U}}_d}{\left|\mathbf{S}\right|} s^{\frac{2}{\eta_d}} \mathbb{E}\left[\tilde{P}_d^{\frac{2}{\eta_d}}\right] \Gamma\left(1+\frac{2}{\eta_d}\right) \Gamma\left(1-\frac{2}{\eta_d}\right)\right) 
\label{LT1}
\end{align}
\normalsize
where $\tilde{\Phi}_d$ is the PPP representing the locations of D2D transmitter, $\mathbb{E}_{x,y}[.]$ denotes the expectation with respect to the random variables $x$ and $y$, $(ii)$ follows from the independence between $\tilde{\mathbf{\Phi}}_d$, $P_d$, and $h$, and $(iii)$ is obtained by exploiting the moment generating functional of the PPP \cite{martin-book} and the fact that there in no interference protection around the D2D receivers. 

For the interference at the test D2D receiver from cellular UEs, an approximate LT of the aggregate interference on a D2D receiver located at the origin from other D2D transmitters is obtained based on the assumption that the interfering cellular UEs constitute a PPP with intensity $\lambda$ rather than a softcore process and that their transmit powers are independent. The intensity $\lambda$ is a consequence of scheduling only one user per BS to avoid intra-cell interference. The approximate LT is obtained as

\small
\begin{align} 
&\mathcal{L}_{\mathcal{I}_{cd}}(s) = \mathbb{E}\left[e^{- \underset{u_i \in \tilde{\mathbf{\Phi}}_c}{\sum} \tilde{P}_{c_i} h_i \left\|u_i\right\|^{-\eta_d}}\right]\notag \\
 &=  \mathbb{E}_{ \tilde{\mathbf{\Phi}}_c}\left[\underset{u_i \in \tilde{\mathbf{\Phi}}_c}{\prod} \mathbb{E}_{P_c,h}\left[ e^{-s \tilde{P}_{c_i} h_i \left\|u_i\right\|^{-\eta_d}}\right]\right]\notag \\
 &=   \exp\left( - 2 \pi  \lambda \int_{0}^{\infty}\mathbb{E}_{P_c,h}\left[  \left(1- e^{-s \tilde{P}_{c} h x^{-\eta_d}}\right)\right] xdx \right)\notag \\
&=\exp\left( - { \pi  \lambda s^\frac{2}{\eta_d}} \mathbb{E}\left[\tilde{P}_c^{\frac{2}{\eta_d}}\right]  \Gamma\left(1+\frac{2}{\eta_d}\right)  \Gamma\left(1-\frac{2}{\eta_d}\right)\right) 
\label{LT2}
\end{align}
\normalsize
 where $\tilde{\Phi}_c$ is the PPP representing the locations of the interfering UEs in the uplink. Note that we use $\eta_d$ as the path-loss exponent between any two UEs.  Based on Slivnyak's theorem for PPPs ~\cite{martin-book}, the obtained LTs are valid for any D2D receiver located in a generic position.

\section{Proof of Theorem~\ref{th2}} \label{pth2}

The theorem is proved by obtaining the LTs  of $\mathcal{I}_{dc}$ and $\mathcal{I}_{cc}$ for a BS located at the origin. 
 First, we derive an approximate LT of the interference from the cellular UEs. Note that orthogonal channel assignment per BS brings correlations among the locations of interfering UEs which highly complicate the analysis. Therefore, the derivation here is based on the following two facts and one key assumption.

\textbf{Fact $\#1$}: The UEs associate to the nearest BSs and all UEs employ truncated channel inversion protocol. Therefore, the average interference received from any single interfering UE is strictly less than $\rho_o$. 

\textbf{Fact $\#2$}: Each BS assigns a unique channel to each UE, hence, the intensity of the interfering UEs is $\lambda$. 

\textbf{Key assumption}: The interfering UEs  constitute a PPP and their transmit powers are independent. 

\small
\begin{align}
 &\mathcal{L}_{\mathcal{I}_{cc}}(s) = \mathbb{E}\left[e^{- \underset{u_i \in \tilde{\mathbf{\Phi}}_c}{\sum} P_{c_i} h_i \left\|u_i\right\|^{-\eta_c}}\right]\notag \\
&=  \mathbb{E}\left[\underset{u_i \in \tilde{\mathbf{\Phi}}_c}{\prod} e^{-s P_{c_i} h_i \left\|u_i\right\|^{-\eta_c}}\right]\notag \\
 &=  \exp\left( - 2 \pi   \lambda\int_{\left(\frac{P_c}{\rho_o}\right)^\frac{1}{\eta_c}}^{\infty}\mathbb{E}_{P_c}\left[\mathbb{E}_{h}\left[  \left(1- e^{-s P_{c} h x^{-\eta_c}}\right)\right]\right] xdx \right)\notag \\
&=\exp\left( - {2 \pi   \lambda} {s^\frac{2}{\eta_c}} \mathbb{E}_{P_c}\left[ P_c^\frac{2}{\eta_c}\right] \int_{(s \rho_o)^\frac{-1}{\eta_c}}^{\infty} \frac{x}{x^{\eta_c}+1 }dx \right)\notag \\
&\stackrel{(\eta_c=4)}{=}\exp\left( - { \pi  \lambda}{s^\frac{2}{\eta_c}} \mathbb{E}_{P_c}\left[ P_c^\frac{2}{\eta_c}\right]  \arctan(\sqrt{s \rho_o })\right).\notag 
\end{align}
\normalsize

 In the following, we obtain an approximate LT of the aggregate interference on a BS located at the origin from other D2D transmitters. This is based on the assumption that the interfering D2D UEs constitute a PPP rather than a Poisson hole process. Note that the interference protection imposed by the mode selection scheme is captured by insuring that the interference from any individual D2D UE is upper bounded by $T_d \rho_o$. The approximate LT is obtained as follows:

\small
\begin{align}
&\mathcal{L}_{\mathcal{I}_{cd}}(s) = \mathbb{E}\left[e^{-s\mathbbm{1}(P_d \left\|x_i\right\|^{-\eta_c} < T_d \rho_o ) \sum_{x_i \in \Phi} P_d h \left\|x_i\right\|^{-\eta_c}}\right] \notag \\
&= \mathbb{E}\left[e^{-s\mathbbm{1}( \left\|x_i\right\| > \left(\frac{P_d }{T_d  \rho_o}\right)^\frac{1}{\eta_c} ) \sum_{x_i \in \Phi} P_d h \left\|x_i\right\|^{-\eta_c}}\right] \notag \\
 &= \mathbb{E}\left[\underset{x_i \in \Phi_d}{\prod} e^{-s\mathbbm{1}( \left\|x_i\right\| > \left(\frac{P_d }{ T_d \rho_o}\right)^\frac{1}{\eta_c} ) P_d h \left\|x_i\right\|^{-\eta_c}}\right] \notag \\
 &=  \exp\left( - \frac{2 \pi \tilde{\mathcal{U}}_d}{\left|\textbf{S}\right|}  \int_{\left(\frac{P_d}{T_d  \rho_o}\right)^\frac{1}{\eta_c}}^{\infty} \mathbb{E}_{P_d}\left[\mathbb{E}_{h}\left[  \left(1- e^{-s P_{d} h x^{-\eta_c}}\right)\right]\right] xdx \right)\notag \\
 &=  \exp\left( - \frac{2 \pi \tilde{\mathcal{U}}_d}{\left|\textbf{S}\right|} s^\frac{2}{\eta_c} \mathbb{E}_{P_d}\left[ P_{d}^\frac{2}{\eta_c}\right] \int_{\left(\frac{1}{sT_d \rho_o}\right)^\frac{1}{\eta_c}}^{\infty} \left( \frac{y }{y^{\eta_c}+1  }\right) dy \right)\notag \\
&\stackrel{(\eta_c=4)}{=}  \exp\left( - \frac{ \pi \tilde{\mathcal{U}}_d}{\left|\textbf{S}\right|} s^\frac{2}{\eta_c} \mathbb{E}_{P_d}\left[ P_{d}^\frac{2}{\eta_c}\right] \arctan\left(\sqrt{sT_d \rho_o}\right)\right). \notag \\
\end{align}
\normalsize

Based on Slivnyak's theorem for PPPs ~\cite{martin-book}, the obtained LTs are valid for any BS located in a generic position.

\end{document}